\documentclass[12pt]{amsart}
\usepackage{comment} 
\usepackage{amssymb}
\usepackage{amsmath}
\usepackage{amsfonts}
\usepackage{mathrsfs}
\usepackage{graphicx}
\usepackage{color}
\usepackage[onehalfspacing]{setspace}
\usepackage{caption}
\usepackage{enumerate}
\usepackage[round]{natbib}
\usepackage{appendix}
\usepackage{lscape}
\usepackage{subcaption}
\usepackage{graphicx}
\usepackage{amsfonts}
\usepackage{placeins}
\usepackage[utf8]{inputenc}
\usepackage{charter}
\usepackage[colorlinks=true,citecolor=blue,urlcolor=blue,pdfpagemode=UseNone,pdfstartview=FitH]{hyperref}
\usepackage{apptools}
\usepackage{multibib}
\usepackage{multirow}

\usepackage{dcolumn}
\usepackage{pdflscape}

\usepackage{booktabs}
\usepackage{threeparttable}

\newcites{main,supp}{References,References}
\makeatletter
\def\section{\@startsection{section}{1}
	\z@{1.0\linespacing\@plus\linespacing}{.8\linespacing}{\Large}}

\def\subsection{\@startsection{subsection}{2}
	\z@{.8\linespacing\@plus.7\linespacing}{.7\linespacing}{\large}}

\def\subsubsection{\@startsection{subsubsection}{3}
	\z@{.5\linespacing\@plus.7\linespacing}{-.5em}{\normalfont\bfseries}}
\makeatother

\numberwithin{equation}{section}

\newtheorem{proposition}{Proposition}[]

\newtheorem{lemma}{Lemma}[section]
\newtheorem{corollary}{Corollary}[section]
\theoremstyle{definition}

\theoremstyle{definition}

\theoremstyle{definition}

\title{}
\begin{document}
	\vspace*{2.5ex minus 1ex}
	\begin{center}
		\Large \textsc{When Predictions Become Regressors: A Split-Sample Correction for Biases in Downstream Inference}
		\bigskip
	\end{center}
	
	\date{%
		\today%
	}

\date{\today}
\vspace*{1.5ex minus 1ex}
	\begin{center}
		Nathan Canen and Ted Enamorado\\
	
	\medskip
	
	
	\end{center}
	
\thanks{
	\textbf{Canen:} (Corresponding Author) University of Warwick and CEPR. email: \url{nathan.canen@warwick.ac.uk} \\
	\textbf{Enamorado:} Washington University in St. Louis. email: \url{ted@wustl.edu}\\ 
We thank Shantanu Chadha for outstanding research assistance.}

	
\begin{abstract} 
Prediction-based methods, including Large Language Models (LLMs) and other machine learning techniques, are often used to construct measures of political phenomena that are difficult to quantify directly, such as policy positions in manifestos or emotions expressed on social media. In many applications, these prediction-generated measures are used as explanatory variables in regression models, even though they are measured with error. This leads to biased estimates. 
In this paper, we propose a simple solution to these biases: instrumental variables constructed from multiple measures created on independent splits of the original data. This approach is theoretically valid, easy to implement, and does not require new data. Through simulations, we show that this approach recovers estimates close to the true values, even in relatively small samples, while the standard approach can produce substantial bias in practice. We illustrate the method by revisiting two applications: whether gendered speech affects legislative outcomes in the German Parliament, and whether political risk influences poverty alleviation programs in China.

\end{abstract}
	\maketitle

\newpage
\section{Introduction}

The increasing availability of large datasets, whether from traditional or non-standard data sources (e.g., text, images, audio), has enabled political and social scientists to create new measures of objects that are hard to quantify, such as measures of sentiment or emotions (e.g., irony, anger) in text, policy positions from manifestos, measures of persuasion, media slant from news articles (at both aggregate and individual levels), among others. These measures are often obtained using prediction-based methods, such as Large-Language Models (LLMs), novel machine learning techniques, or even simpler methods such as logistic regression. However, such variables are rarely the target of empirical analyses per se. Rather, most applications construct these measures and then use them as explanatory variables in a linear regression model: e.g., to understand the role of emotions or slant on political or voting outcomes.\footnote{See \cite{linegar, li, coil} for extensive recent surveys, including of the many variables created with LLMs (e.g., Table 3 in \citealp{li}).}

In practice, most researchers use the generated variables ``as is" on the right-hand side of the regression, as if they captured the true, latent variable of interest. However, such prediction-generated variables are estimated with error. For example, an LLM's estimate of an article's slant or the emotional content of text may only be a proxy for the true value, as it uses a limited sample and subject to idiosyncratic (e.g., query-level) errors. This means that naively using these generated variables in a regression can generate endogeneity bias due to measurement error, as is well known from standard textbooks (e.g., \citealp{wooldridge}), and shown more recently when using such methods with unstructured data (\citealp{battaglia}) and LLMs (\citealp{duan}). 

The main contribution of this paper is to provide a solution for these biases that is (i) computationally simple, 
(ii) theoretically desirable, and (iii) does not require \textit{any more data or variables} than what the researcher used in the first place. The solution is based on creating a specific class of instrumental variables from repeatedly measuring the underlying concept on separate, independent splits of the original dataset. We then show the benefits of our approach using statistical results, simulations based on models found in the literature, and revisiting two empirical applications: one from \cite{ash} on whether gendered speech induces different outcomes in the German Parliament, and another on whether political risk (as evaluated from text) affects the implementation of government programs in China (\citealt{lin2025}).


To be more specific, our procedure uses predictions on different subsamples of the original data to create multiple measures for the same concept: e.g., multiple estimates for the slant of an article, or the emotional language in a passage. This is done through sample splitting: the same or multiple methods produce alternative measures based on each independent subsample. The multiple measures can then be used as instruments for one another: they are correlated because they measure the same underlying feature. However, their measurement errors are independent conditional on inputs (under standard assumptions in linear regressions) when they are independent errors from (conditionally) independent observations. We formally show this instrumental variable approach is valid (i.e., it satisfies the exclusion and exogeneity restrictions under the assumptions above), its associated estimator is consistent for the coefficients in the linear regression, we demonstrate finite-sample gains in simulations, and provide a simple algorithm that can be implemented using existing statistical packages. 

As a running example, one can think of an LLM that can be applied on each independent split, creating multiple measures for the same underlying concept. Yet, our results extend beyond LLM-based measures. Our main requirements are independent variability in measurement errors across observations measuring the same feature, and a sufficiently large dataset so that splitting is possible and the resulting subsamples are sufficiently large. With this approach, the researcher does not need a labelled subset (as in \citealp{angelopoulos}), to manually calibrate/classify some subset of that data (as in \citealp{duan}), to observe part of that data and validity at the limit (\citealp{yang22}) or assume the parametric distribution for the variable of interest, yielding more numerically cumbersome estimation (\citealp{battaglia}). On the other hand, there are certainly efficiency losses relative to using such additional information, and our approach requires measurement errors to be uncorrelated across measures, conditional on the input. We expand on these comparisons below.

We then illustrate our results in simulations that assess the extent of the measurement error bias and the finite-sample performance of our solutions. We provide multiple simulation designs, ranging from simple to more complex settings, mimicking the data structures and parameter values found empirically.\footnote{For example, we consider simulations where a document-term matrix representing word counts is generated, the measure of interest is constructed from the document-term matrix using an embedding estimator, which is then fed into a linear regression.} We also test a variety of measurement error structures. Across specifications, we show that the measurement error bias from using LLM-type generated regressors can be significant. 
For example, the OLS estimates are close to half of the true values, and other times they can be significantly above the true values (as we show theoretically). Meanwhile, the estimates based on our instruments are very close to the true values (e.g., often within 0.01) even with relatively small sample sizes for large datasets. 

Finally, we show that these conclusions hold in two applications. In the first, we revisit the empirical application of \cite{ash}. In their paper, the authors use a corpus of over half a million speeches from the German parliament and estimate a measure of gendered speech (i.e., whether a passage of text is more easily identified as a woman's speech than a man's) using a topic model. The authors use this estimated measure in a linear regression and show that speeches with language more closely associated with women's topics receive fewer reactions, and this effect is more pronounced when such speeches are delivered by men. We revisit this exercise with our instrumental variables estimator and show that these effects are significantly \textit{larger} in magnitude once measurement error is accounted for: for most outcomes, the effect of predicted gender on any interaction increases by 50-100\%. Thus, their original results may have significantly underestimated the true effects.

In addition, we revisit \cite{lin2025}. The paper examines how Chinese firms responded to the poverty alleviation campaign launched in 2015 and argues that firms facing greater political risk increased poverty-alleviation spending to strengthen ties with the state. To measure political risk, the author constructs a firm-level Political Risk Index from more than 418,000 investor-firm Q\&As, using BERT to classify whether each exchange is political. The author then incorporates this estimated measure into a linear regression (based on a difference-in-differences design) and finds that firms with higher political risk spent more on poverty alleviation after 2015. We revisit this analysis using our instrumental variables estimator and find that the effect remains positive and statistically significant but is, again, larger in magnitude. This finding is consistent with attenuation from measurement error in the machine-learning-generated regressor, suggesting that correcting for this error reveals an even stronger relationship between political risk and firms' strategic responses to government policy initiatives.

\section{Literature Review} 

The issue of measurement errors in generated regressors, particularly using LLM or machine learning methods, has been noted by various recent papers, such as \cite{angelopoulos, battaglia, duan, yang22, burtch}. \cite{ludwig}, in particular, provides an extensive an overview for LLMs, including associated evidence. In all of the former cases, researchers study the biases associated with generated regressors which are plugged-in to a linear regression. 

A main difference in our approach is that it does not require a ``validation set" of observations, where the underlying measure of interest is observed perfectly. Thus, we do not assume a researcher can observe (even a small) subset of the data without error (e.g., as in \citealp{angelopoulos, duan, yang22, burtch}). In particular, closest to our work are \cite{yang22, burtch} who propose the use of instrumental variables to correct for the endogeneity bias from the measurement error from generated regressors. However, in both papers, the authors assume that they can observe a (small) subset of the data that is correctly labeled. This allows them to construct instruments leveraging the observable error to the unlabeled data, whether through random forests (\citealp{yang22}) or ensemble methods (\citealp{burtch}).\footnote{For example, \cite{yang22} shows that the prediction errors across trees will become uncorrelated (as sample size grows), so they can be used as instruments for one another.} Of course, if such information is available, then it would be more efficient to use it. \cite{duan}, for example, proposes a bias-correction estimator that combines such calibrated (labeled) data into the original regression. Meanwhile, as they can observe the measurement error, \cite{burtch} do not require independence across measurement errors (conditional on the input data). Another alternative would be to use a design-based approach and extrapolate the measurement error from the differences between predicted and actual outcomes for a subset of the data (e.g., \citealp{egami, rister}). This requires a focus on a binary treatment and on a different asymptotic approach. However, the availability of validation data may fail in many settings, as pointed out by \cite{battaglia}. Indeed, in political science, it is unlikely that researchers could have full confidence in exact media slants, or emotions in political discourse. Our proposal may be desirable in these settings.\footnote{While\cite{ludwig} suggest that when estimating an LLM without validation data and a ground truth exists, that the ``practical answer is clear: invest effort and collect a small validation sample", this may not be feasible.} 

Our solution is similar in spirit to \cite{gillen}'s approach to dealing with measurement error in experimental data: in their case, they suggest running the same experiment multiple times and using later experimental outcomes as instruments for the first. This is a valid instrument if the measurement errors in the first experimental run are independent of the latter errors. In our case, the researcher does not even need to run new experiments that can be costly. Rather, when the original dataset is large enough, the experiment is ``mimicked" through creating independent sample splits (i.e., input samples) and the LLM outputs the multiple measures without additional collection. In this way, it builds on a tradition of multiple measures trying to replicate instruments (e.g., using twins as in \citealp{ashenfelter}). The resulting estimator is an instrumental variables estimator with two samples (see \citealp{angrist92,inoue} for instance), but where the samples are generated through random splits of the data rather than multiple datasets.

Our multiple measures solution does not require assuming a specific parametric form of the type of measurement error that the researcher faces, as in \cite{battaglia}. This is useful when the researcher is unwilling to take a stance on the class of models generating the regressor (e.g., coming from a multinomial logistic, as in the former; or from a specific type of neural network and applying a result like \citealp{farrell}) as different first-step estimators generate different expressions for bias). It is also less computationally costly. On the other hand, if such parametric forms are deemed reasonable,  \cite{battaglia} derives the expression of the bias of the estimand, proposes a more efficient estimator that can remove this bias and, thus, may have significant statistical gains.

A main limitation of our approach, however, is that we assume that the errors across measures \textit{conditional on the input (thus, latent feature) and other observables} are uncorrelated across observations. This is also used in \cite{yang22, burtch}, for example. Simulations show that, when this holds, our results work very well with small samples. However, correlations induced by unobservables are plausible in many settings, whether due to the way predictions are constructed, data structures (e.g., network, geographical data that is unaccounted for in prediction), among others.


 \section{Set-Up}\label{main}

Let $\{(Y_i, Z_i)\}_{i=1}^n$ be i.i.d. from a representative population, where $Y_i$ represents the outcome for individual $i$ (e.g., voting behaviour, approval of a politician, etc.), and $Z_i \in \mathbb{R}^K$ represents the vector of features that will be used to construct the measure of interest (e.g., words from a speech, the words in a media article, etc.).

The measure of interest is defined by 
\begin{eqnarray}
X_i = f(Z_i),\label{Xi}
\end{eqnarray}
where $f(\cdot)$ can represent an LLM (e.g., ChatGPT), another machine learning model (e.g., clustering), a known function (e.g., taking the population mean of those variables), among others. The measure $X_i \in \mathbb{R}$ is one-dimensional (e.g., sentiment of a post $i$, gendered speech, the topic(s) in document $i$, etc.). 

Our target model is the linear regression model:
\begin{eqnarray}
Y_i = \beta_0 + \beta_1 X_i + \varepsilon_i,\label{eq1}
\end{eqnarray}
where $\mathbb{E}[\varepsilon_i \mid X_i]=0$, so that the error term is exogenous to the true measure of interest (e.g., a measure of emotional content, as in \cite{gennaro}, or slant of one's feed, as in \citealp{braghieri}). This assumption isolates the role of measurement error in driving endogeneity from ML-driven measurement error. It is also commonplace in this setting, such as \cite{yang22, burtch}. 

We present the simple version of equation (\ref{eq1}), with no other covariates, for expositional simplicity. Without loss, we can always residualize the other covariates, and instrument, and obtain an analogous expression.

Unfortunately, estimating (\ref{eq1}) is infeasible, as we can only measure $X_i$ with error. Thus, we can use our sample $Z_i$ to produce an estimated measure of $X_i$, denoted $\hat{X}_i$.

\medskip

\textbf{Assumption 1: (Measurement Error Structure)}: We assume that we can only measure $X_i$ with error, following 
\begin{eqnarray}
\hat{X}_i = X_i + \eta_i,\label{eq2}
\end{eqnarray}

where $\eta_i$ is the measurement error and satisfies:
\begin{enumerate}[(i)]
\item 
 $\{(Z_i,\eta_i,\varepsilon_i)\}_{i=1}^n$ are independent across $i$ in the full sample.
\item
$\mathbb{E}[\eta_i \mid Z_i] = 0$ 
\end{enumerate} 

\bigskip

We can always rewrite the estimated value $\hat{X}_i$ as the true value $X_i$ plus an error term. Thus, the main assumption in (\ref{eq2}) is not the additive structure or that it is mean zero. It is mean-zero as the constant can be captured in the $\beta_0$ term. Rather, it is that errors are independent across $i$. This is weaker than classical measurement error in the sense that it does not assume a constant variance for $\eta_i$ - see the examples below. Our main propositions are also robust to within-unit correlation between $\eta_i, \varepsilon_i$.
Throughout the paper, we assume that $Var(X_i)>0$, and the second-moments of $X_i, \varepsilon_i, \eta_i$ are bounded. We now discuss this main assumption.

\subsection{The Interpretation of Assumption 1}

Assumption 1 implies that our observed measure $\hat{X}_i$, differs from the true value, $X_i$, through measurement error $\eta_i$. We can interpret this measurement error through multiple possible sources in machine learning (ML)-type methods. 

In particular, suppose that our observed measure can be written as:
\begin{eqnarray}
\hat{X}_i = \hat{f}(Z_i)+\xi_i,\label{Xihat}
\end{eqnarray}
where $\xi_i$ is independent across $i$ (conditional on $Z_i$).

This error structure can capture two separate sources of measurement error that are inherent to ML-type methods. First, such methods are based on pretrained models that use other datasets. Thus, we may be only able to observe what comes from an estimated model $\hat{f}(Z_i)$, rather than its population (``true") counterpart, $f(Z_i)$ where $\hat{f}(\cdot)$ is a \textit{pretrained} estimator (i.e., not a function of the sample $\{(Y_i, Z_i)\}_{i=1}^n$). Second, there may be a different source of error when computing $\hat{X}_i$ rather than $X_i$ (i.e., measurement noise $\xi_i$). This may be due to the sampling variability, or the randomness in the output of a measure (such as due to random steps in outputs, queries, lack of seeding, idiosyncracies such as the effects of small changes to prompt wording), etc.

Subtracting \eqref{Xi} from \eqref{Xihat}, we can then obtain:
\begin{eqnarray}
\hat{X}_i - X_i = \underbrace{\hat{f}(Z_i) - f(Z_i)}_{model~error} + \underbrace{\xi_i}_{measurement~noise},\label{etai}
\end{eqnarray}
so that $\eta_i = \hat{f}(Z_i) - f(Z_i) + \xi_i$.

Assumption 1 requires that the first term is zero, which can be interpreted as $f(z) = \mathbb{E}[\hat{X}_i \mid Z_i = z]$ being the population counterpart to the pre-trained model. For example, the researcher is interested in the output from a ``true" association between $Z_i$ and $X_i$, they must settle for the pre-trained (estimated) version from available models from ChatGPT or Claude (with some idiosyncratic output).  Thus, $\beta_1$ is the parameter on the model's population measure. In this set-up, $\eta_i = \xi_i$ and our results show how to correct for the idiosyncratic variation in a model's output, even when the model used is the one of interest. 


Note that we require independent variation in $\eta_i$ and in $\hat{X}_i$ across $i$, even conditional on $Z_i$. This comes through variability in $\xi_i$. Crucially, the independence of $\xi_i$ adds independent variation across $i$ onto the observed measure $\hat{X}_i$. This can rationalize how inputs $Z_i = Z_j$ generate different observed measures $\hat{X}_i \neq \hat{X}_j$. Thus, even if $Z_i = Z_j$ may end up generating the same $X_i$, the observed variables may be different even conditional on $Z_i$.


Examples 1 and 2 below present cases that can fit the above set-up.

\bigskip

\textbf{Example 1:} If $\eta_i$ is classical measurement error (i.e., mean $0$, variance equal to some $\sigma^2_{\eta}>0$), then the estimand for the parameter for $\hat{X}_i$ from equation (\ref{eq3}) equals $\beta_1 \frac{\sigma^2_{X}}{\sigma^2_{\eta} + \sigma^2_X}$, where $\sigma^2_X = Var(X_i)$. This is the classical attenuation bias case. See \cite{wooldridge}, for instance.

\bigskip

\textbf{Example 2:} Let $Z_i \sim Binomial(C_i, X_i)$. For example, $Z_i$ is the number of articles containing specific terms $i$, $C_i$ is the total number of articles that month, and we wish to know $X_i$, the true measure of uncertainty (in \citealp{baker}) or slant (\citealp{braghieri}).  Then, \cite{battaglia} show that, under an asymptotic sequence $\sqrt{n} \mathbb{E}\left(\frac{1}{C_i}\right) \to \kappa \in [0, \infty)$, then $\sqrt{n}(\hat{\beta}_1 - \beta_1)$ has asymptotic bias $-\kappa \beta_1 \frac{\mathbb{E}X_i(1-X_i)}{Var(X_i)}$. This argument generalizes to $\hat{f}$ of multinomial logistic form, including \cite{gentzkow} among others.

\subsection{The Feasible Model}

Under Assumption 1, we can rewrite the true model as a feasible counterpart:
\begin{eqnarray}
Y_i &=& \beta_0 + \beta_1 \hat{X}_i + \varepsilon_i - \beta_1(\hat{X}_i-X_i)\notag \\
&=& \beta_0 + \beta_1 \hat{X}_i + u_i,\label{eq3}
\end{eqnarray}
where $u_i = \varepsilon_i - \beta_1\eta_i$, and the first line adds and subtracts $\beta_1 \hat{X}_i$. As $u_i$ is correlated with $\hat{X}_i$ (both are functions of $\eta_i$, by Assumption 1), there is endogeneity. This is the argument in standard statistical textbooks (e.g., \citealp{wooldridge}).

However, there is a key difference in our setting relative to textbook references: the form of the bias may differ from classical attenuation bias. We formalize this in Lemma \ref{lemma1} in Appendix. This is because we do not assume $Cov(\eta_i, \varepsilon_i)=0$: we allow for dependence within-unit, because our instrument's validity only depends on the covariance between $\eta_{nn(i)}, \varepsilon_i$ which we will show below is equal to 0.

\subsection{Our Solution}

Rather than requiring further data classifications (e.g., \citealp{duan}), estimating complex likelihoods (\citealp{battaglia}), or not accounting for frequentist inference (\citealp{egami}), we propose a very simple, tractable approach, easy to implement, and with an associated consistent estimator. In summary, we propose using instrumental variables. But our instrumental variables can be generated from the existing dataset, so, under Assumption 1, the typical concerns about finding a valid one do not apply.

To create such instrumental variables, we combine the commonly used sample splitting approach in machine learning with the intuition from experimental social science where a researcher uses additional experimental measures as instruments for each other (e.g., \citealp{gillen}). In essence, by splitting the sample, we can create multiple (noisy) measures for $X_i$. As the measurement error is independent across observations, and the split is independent, any correlation across those measures is solely due to the latent (and true), $X_i$. Thus, they are correlated because they measure the same latent object. But they are now exogenous, because the measurement errors are uncorrelated.

To do so, take our original dataset $\{(Y_i, Z_i)\}_{i=1,...,n}$. Now, split that dataset into two independently drawn datasets so that each observation belongs to only one dataset. Call them the \textbf{training} and \textbf{instrument} datasets. As the split is random, the distribution of $\{(Y_i, Z_i)\}$ is the same across both datasets. 

Thus, for each target observation $X_i$, we have two measures: $\hat{X}_i$, obtained from applying the prediction-based method $\hat{f}(\cdot)$ (whether LLM, or another technique) to data from the first independent split, and $\hat{X}_i^{instr}$, obtained from applying the method to the second, where the superscript $instr$ refers to an observation from the instrument data.\footnote{Naturally, we can have many more splits than two. The arguments are analogous. Then, the researcher has the choice of how to construct instruments from the many other measures - e.g., by aggregating them, averaging, etc.} Crucially, we require that the data split into training and instrument datasets, and the matching rule (assigning each $i$ to its matching observation) cannot depend on outcomes or other error terms. As we will see, this independence guarantees that both measures capture part of the latent variable, but their errors are still independent. This is summarized in Assumption 2.

\medskip

\textbf{Assumption 2: (Randomization in Sample Splitting)} The data split into training and instrument datasets follows a randomization $U$ that is independent of $\{Z_i,\eta_i, \varepsilon_i\}_{i=1}^n$. Furthermore, the matching rule is a function of $\{Z_i\}_{i=1}^n$ and this randomization $U$ only.

\medskip

Note that, as the observation $i$ is only in the training sample, $\hat{X}^{instr}_i$ refers to the measure for $X_i$ computed using some other (appropriate) observation in the instrumental data. 

Using the notation in \eqref{Xihat}, we have that:
\begin{eqnarray}
\hat{X}_i &=& \hat{f}(Z_i) + \xi_i, \\
\hat{X}_i^{instr} &=& \hat{f}(Z_{nn(i)}) + \xi_{nn(i)},
\end{eqnarray}
for some observation $nn(i)$ in the instrument data. Thus, even if we use inputs $Z_{nn(i)} = Z_i$ and the same pre-trained model $\hat{f}(\cdot)$, $\hat{X}_i$ and $\hat{X}^{instr}_i$ will still differ through the random variables, $\xi$'s.  The next section suggests approaches to choose $nn(i)$ efficiently, such as selecting the nearest neighbor of $i$, in finite samples.

We propose to use the new measure $\hat{X}_i^{instr}$ - based on the instrumental data - as an instrument for $\hat{X}_i^{train}$. The main theoretical contribution of this paper is to show that this instrument is valid: i.e., it is exogenous $Cov(\hat{X}_i^{instr}, u_i)=0$ where $u_i = \varepsilon_i - \beta_1\eta_i$ from (\ref{eq3}), and  relevant (i.e., that $\hat{X}_i^{instr}$ is correlated with $\hat{X}_i$).\footnote{Conceptually, one could allow $\hat{f}$ to differ across each measure, thereby representing different models. The arguments remain the same, since the differences in those models are fixed (conditional on $Z$).} The theoretical result is summarized in Proposition \ref{proposition} below, with the proof in Appendix \ref{proofs}. The next sections then show the empirical performance of this solution.

\begin{proposition}\label{proposition}
Let $(\hat{X}_i, \hat{X}_i^{instr})$ be two separate prediction-generated regressors computed from different units, from separate independent splits of the data, with perfect matching $Z_{nn(i)}=Z_i$. 

Then, under Assumptions 1-2, $Cov(\hat{X}_i, \hat{X}_i^{instr}) \neq 0$ and $Cov(u_i, \hat{X}_i^{instr}) = 0$.

Thus, $\hat{X}_i^{instr}$ is a valid instrument for $\hat{X}_i$.
\end{proposition}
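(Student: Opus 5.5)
The plan is to condition throughout on the splitting randomization $U$ and on the full input vector $\mathbf{Z}=\{Z_j\}_{j=1}^n$. These two objects determine the matched index $nn(i)$ by Assumption 2. So, conditional on $(\mathbf{Z},U)$, $nn(i)$ is a fixed index $j\neq i$, since the two units lie in different splits. Write $\hat{X}_i = X_i+\eta_i$ and $\hat{X}_i^{instr} = X_{nn(i)}+\eta_{nn(i)}$. Perfect matching $Z_{nn(i)}=Z_i$ gives $X_{nn(i)}=f(Z_{nn(i)})=f(Z_i)=X_i$, so $\hat{X}_i^{instr}=X_i+\eta_{nn(i)}$. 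Both covariances then reduce to computing cross moments of $(X_i,\eta_i,\varepsilon_i,\eta_{nn(i)})$ by the law of iterated expectations over $(\mathbf{Z},U)$.

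\textbf{Exogeneity.} We need $Cov(u_i,\hat{X}_i^{instr})=Cov(\varepsilon_i-\beta_1\eta_i,\,X_i+\eta_{nn(i)})=0$, which splits into four terms.
\begin{enumerate}[(a)]
\item $Cov(\varepsilon_i,X_i)=0$ follows from $\mathbb{E}[\varepsilon_i\mid X_i]=0$.
\item $Cov(\eta_i,X_i)=0$ follows from $\mathbb{E}[\eta_i\mid Z_i]=0$, because $X_i$ is a function of $Z_i$.
\item $\mathbb{E}[\varepsilon_i\eta_{nn(i)}]$ and $\mathbb{E}[\eta_i\eta_{nn(i)}]$ vanish as follows. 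Conditional on $(\mathbf{Z},U)$ and $nn(i)=j$, Assumption 1(i) together with the independence of $U$ (Assumption 2) makes $\eta_j$ independent of $(\eta_i,\varepsilon_i)$. Hence $\mathbb{E}[\eta_j\,\varepsilon_i\mid \mathbf{Z},U]=\mathbb{E}[\eta_j\mid Z_j]\,\mathbb{E}[\varepsilon_i\mid\mathbf{Z},U]=0$, using $\mathbb{E}[\eta_j\mid \mathbf{Z},U]=\mathbb{E}[\eta_j\mid Z_j]=0$; the same argument applies to $\eta_i\eta_j$.
\end{enumerate}
Summing over $j$ weighted by $\Pr(nn(i)=j)$ gives zero, and we also use $\mathbb{E}[\eta_{nn(i)}]=0$ to handle the centering. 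This argument never requires $Cov(\eta_i,\varepsilon_i)=0$, which matches the paper's remark that within-unit dependence is allowed. Bounded second moments ensure all expectations exist.

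\textbf{Relevance.} Compute $Cov(\hat{X}_i,\hat{X}_i^{instr})=Var(X_i)+Cov(X_i,\eta_{nn(i)})+Cov(\eta_i,X_i)+Cov(\eta_i,\eta_{nn(i)})$. The last three terms vanish by the same conditioning arguments as in the exogeneity step. This leaves $Var(X_i)>0$, which is assumed, so the covariance is nonzero.

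\textbf{Main obstacle.} The delicate step is that $nn(i)$ is itself random, so the independence claims must hold jointly with the random index. This is exactly where Assumption 2 is used: the split and the matching depend only on $(\mathbf{Z},U)$, never on $\eta$ or $\varepsilon$. I will also need the fact that the $Z_i$ are i.i.d. and $\eta_j$ is independent across units, so that $\mathbb{E}[\eta_j\mid \mathbf{Z},U]=\mathbb{E}[\eta_j\mid Z_j]$. I would state this as the first lemma-style step, conditioning on the events $\{nn(i)=j\}$, before carrying out the covariance algebra.
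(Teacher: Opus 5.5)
Your proposal is correct and follows essentially the same route as the paper's proof. You condition on $(Z_1,\dots,Z_n,U)$ so that $nn(i)\neq i$ is fixed, and dispatch $Cov(X_i,\varepsilon_i)$ and $Cov(X_i,\eta_i)$ directly. You then kill the cross terms through $\mathbb{E}[\eta_{nn(i)}\mid Z_1,\dots,Z_n,U]=\sum_j 1_{nn(i)=j}\,\mathbb{E}[\eta_j\mid Z_j]=0$ together with conditional independence across units, and relevance reduces to $Var(X_i)>0$. You are slightly more explicit than the paper about centering and about why $\mathbb{E}[\eta_j\mid Z_1,\dots,Z_n,U]=\mathbb{E}[\eta_j\mid Z_j]$, but the argument is the same.
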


Intuitively, the first part of Proposition \ref{proposition} holds because both variables measure the same latent variable, $X_i$ - they just do so with different errors. (As we will see below, this is also true even when using different inputs, $Z_i$). Thus, they are correlated. The crux of the argument is about exogeneity: that $\hat{X}_i^{instr}$ is uncorrelated with the error from (\ref{eq3}) with only the training data, $u_i^{train} = \varepsilon_i^{train} - \beta_1 \eta_i^{train}$, where the superscript $train$ refers to an observation in the training data. Indeed, while both $\hat{X}_i^{instr}$ and $\hat{X}_i^{train}$ measure $X_i$ with error, their errors ($\eta_i^{train}, \eta_i^{instr})'$ are uncorrelated by Assumption 1 and the randomization in Assumption 2. This is related to \cite{gillen}, who assume a related condition across different experiments. Here, rather than running additional experiments (for the same $i$), our experiments are deriving measures from other splits of the data (or independent LLMs).

\subsubsection{Robustness of the Result to Imperfect Matches}

Proposition \ref{proposition} assumed that we can find pairs $(Z_i, Z_{nn(i)})$, each from a different sample, such that $Z_i = Z_{nn(i)}$. As the two samples have the same underlying distribution, in large samples and when $Z_i$ is discrete, this should occur with positive probability. And when $Z_i$ is continuous, we will be finding matches that can be arbitrarily close in large samples, although never exactly true. Furthermore, in small samples, this ``perfect match" assumption may fail, particularly when $Z_i$ is continuous. 

To deal with this issue, we could find alternative $Z_{nn(i)}$ to $Z_i$, such as some that are close (although not equal). The algorithm in the next subsection proposes ways to do so, such as nearest neighbor matching - see (Step 2). 

And, Proposition \ref{prop_imperfect} in Appendix shows that Proposition \ref{proposition} holds almost exactly under imperfect matching: the exogeneity of the instrument is not affected by the imperfect match (under the stronger exogeneity of $\varepsilon_i$ to $\mathbb{E}[\varepsilon_i \mid Z_i]=0$). In fact, match quality only affects first-stage strength, but not the consistency of the estimator: the instrument will be relevant as long as the $Cov(X_i, X_{nn(i)}) \neq 0$ - i.e., that the chosen match has some (even if small) correlation. This also implies that a worse match will affect the strength of the instrument. 

\subsubsection{Estimation}

Naturally, once we consider a linear regression model (\ref{eq1}) with endogeneity, and have provided a valid instrument (Proposition \ref{proposition} and \ref{prop_imperfect}), then $\beta_1$ can be consistently estimated using a standard instrumental variables (IV) estimator with textbook inference. To invoke standard asymptotics, however, we requires that no observation in the instrument data is used as a match for more than one observation in the training data (i.e., unique matches). This guarantees a lack of dependence across the training data observations (i.e., the same instrument observation is not used many times, generating additional correlation across such observations). 

With this condition, we are back to the setting with an IV estimator with i.i.d. data, with an instrument which satisfies exclusion and relevance conditions. This is summarized in the corollary below. With many instruments, one can use Two Stage Least Squares (2SLS), as long as its regularity conditions are satisfied accordingly.

\medskip

\begin{corollary}
Suppose the conditions of Proposition \ref{prop_imperfect} hold and that the matching rule is such that each observation in the instrument data is matched to one observation in the training data. Then, the instrumental variables estimator from using $\hat{X}_i^{instr}$ as an instrument for $\hat{X}_i$, denoted $\hat{\beta}_{IV} = \frac{\sum_{i=1}^{n_1} \left(\hat{X}_i^{instr}-\overline{\hat{X}^{instr}}\right) (Y_i - \overline{Y})}{\sum_{i=1}^{n_1} \left(\hat{X}_i^{instr}-\overline{\hat{X}^{instr}}\right)(\hat{X}_i - \overline{\hat{X}})}$ is consistent for $\beta_1$, where $n_1$ is the number of observations in the training data, $n_2$ the number of observations in the instrument data, and $n_1 \to \infty, n_2 \to \infty$. 
\end{corollary}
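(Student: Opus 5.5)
The plan is to write the estimator in its usual sampling-error form and apply a law of large numbers to the numerator and the denominator separately. Substituting the feasible model (\ref{eq3}), $Y_i = \beta_0 + \beta_1 \hat{X}_i + u_i$ with $u_i=\varepsilon_i-\beta_1\eta_i$, into the numerator and using $Y_i-\overline{Y} = \beta_1(\hat{X}_i-\overline{\hat{X}}) + (u_i-\overline{u})$ gives
\begin{eqnarray*}
\hat{\beta}_{IV} - \beta_1 = \frac{\frac{1}{n_1}\sum_{i=1}^{n_1}\left(\hat{X}_i^{instr}-\overline{\hat{X}^{instr}}\right)(u_i-\overline{u})}{\frac{1}{n_1}\sum_{i=1}^{n_1}\left(\hat{X}_i^{instr}-\overline{\hat{X}^{instr}}\right)(\hat{X}_i-\overline{\hat{X}})}.
\end{eqnarray*}
It then suffices to show that the numerator converges in probability to $Cov(\hat{X}_i^{instr},u_i)$ and the denominator to $Cov(\hat{X}_i^{instr},\hat{X}_i)$. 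Proposition \ref{prop_imperfect}, which plays the role of Proposition \ref{proposition} under imperfect matching, gives $Cov(\hat{X}_i^{instr},u_i)=0$ and $Cov(\hat{X}_i^{instr},\hat{X}_i)\neq 0$. The continuous mapping theorem then yields $\hat{\beta}_{IV}\to_p\beta_1$.

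The main step is justifying the law of large numbers for the triangular array of matched tuples $W_i=(Y_i,Z_i,\eta_i,\varepsilon_i,Z_{nn(i)},\eta_{nn(i)})$, $i=1,\dots,n_1$. The split is generated by $U$ and, by Assumption 2, is independent of $\{Z_j,\eta_j,\varepsilon_j\}$. The plan is therefore to condition on $U$ and on $\{Z_j\}$, which fixes the split and the matching map $nn(\cdot)$. The uniqueness condition makes $nn(\cdot)$ injective, and the training and instrument samples are disjoint, so for $i\neq i'$ the index sets $\{i,nn(i)\}$ and $\{i',nn(i')\}$ do not overlap. By Assumption 1(i), the residual terms $(\eta_j,\varepsilon_j)$ are then conditionally independent across these disjoint index sets. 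The summands of the numerator and denominator, built from $(\hat{X}_i^{instr},u_i,\hat{X}_i)$, are thus conditionally independent across $i$, with second moments bounded by the paper's standing moment assumptions.

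A weak law for independent but not identically distributed arrays (Chebyshev, using the bounded second moments) shows that each average minus its conditional mean is $o_p(1)$. The demeaning terms $\overline{\hat{X}^{instr}}$, $\overline{u}$ and $\overline{\hat{X}}$ are handled the same way. The remaining task is to show that the average of the conditional means converges to the corresponding population covariance. For the numerator this is immediate: conditional on the $Z$'s, $E[\eta_{nn(i)}u_i\mid Z]=0$ and $E[u_i\mid Z]=0$, by $\mathbb{E}[\eta_i\mid Z_i]=0$ and the stronger exogeneity $\mathbb{E}[\varepsilon_i\mid Z_i]=0$ assumed in Proposition \ref{prop_imperfect}. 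The numerator's conditional mean is therefore essentially zero, up to the $X_i$ versus $X_{nn(i)}$ term, which also has mean zero given the $Z$'s. For the denominator, the conditional mean reduces to averages of functions of $(X_i,X_{nn(i)})$. This is where I expect the real obstacle, since $X_{nn(i)}$ depends on the whole $Z$-sample through the matching. I would handle it by assuming, as is implicit in Proposition \ref{prop_imperfect}, that the empirical covariance of matched pairs converges to a nonzero limit $\lim Cov(X_i,X_{nn(i)})$. Under perfect matching this limit is simply $Var(X_i)>0$. Combining the two limits gives consistency, with $n_2\ge n_1$ needed so that the matching can be injective.
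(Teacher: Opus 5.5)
Your proposal is correct and takes essentially the paper's route. The paper only remarks that unique matching removes dependence across training observations, so textbook IV consistency applies, with Proposition~\ref{prop_imperfect} supplying exogeneity and relevance. Your version is more careful than the paper's ``back to i.i.d.\ data'' remark in two respects. First, conditioning on $U$ and $\{Z_j\}$ shows that the matched summands are only conditionally independent, not i.i.d. Second, you state explicitly the asymptotic relevance condition the paper uses implicitly: the empirical covariance of matched pairs $(X_i, X_{nn(i)})$ must converge to a nonzero limit. Under perfect matching this holds automatically, with limit $Var(X_i)>0$.
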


\bigskip

\subsection{Practical Implementation: An Algorithm}

In practice, one may not have the exact same-valued $Z_i$'s (or texts that generate the same $X_i$'s) in multiple splits of the data, so that they may not be measuring the same $X_i$. This can be addressed by finding the closest possible values across the two datasets. 
We propose using nearest-neighbor matching (based on $Z_i$) to find a match for the training observation in the instrument dataset. Thus, we now have two measures for each observation in the training dataset: their original measure from the training data, $\hat{X}_i^{train}$, and the measure based on the nearest neighbor $Z_{nn(i)}$ given by $\hat{X}_i^{instr}$ where $nn(i)$ is an index for the nearest neighbor for $i$, some $j \neq i$. They will be correlated as $f(Z_i)$ and $f(Z_{nn(i)})$ will be close and measure (approximately) the same $X_i$. This is summarized in the following algorithm.

\bigskip

\noindent\rule{16cm}{1pt}

\textsc{An IV Approach to Correcting Biases from Prediction-Generated Regressors}

\quad \textbf{Input:} 
\begin{enumerate}[(i)]
\item
Variables $Z_i$ (used to construct the target measure)
\item
Outcomes $Y_i$.
\item
Chosen method to generate $X_i$ from $Z_i$ (e.g., a specific LLM).
\end{enumerate}

\medskip

\quad \quad \quad (\textbf{Step 1}) Independently split the data $\{(Y_i, Z_i)\}_{i=1}^n$ into two groups. Call them the training data and the instrument data. Each subset should, thus, have the same distribution.

\medskip

\quad \quad \quad (\textbf{Step 2}) For each $i$ in the training data, find the $j$ in the instrument dataset that is most similar to $i$. For example, via nearest-neighbor matching.\footnote{For our main theoretical results, we assume that each data point is matched to a single (different) match.} 

\medskip

\quad \quad \quad (\textbf{Step 3})  Then, for each $i$, generate the intended covariate using the original $Z_i$ and the associated $Z_{nn(i)}$ from the instrumental data. The output is $\hat{X}_i$ and $\hat{X}_i^{instr}$.

\medskip

\quad \quad \quad (\textbf{Step 4}) Estimate $\beta_1$ by instrumental variables, using $\hat{X}^{instr}$ as an instrument for $\hat{X}_i$ in the regression using the training sample: 
\begin{eqnarray}
Y_i &=& \beta_0 + \beta_1 \hat{X}_i + u_i.
\end{eqnarray} 
Both the estimate and standard errors are obtained using the standard (closed-form) formulas for IV estimators - see \cite{wooldridge}. If there are many instruments, use Two Stage Least Squares (2SLS) accordingly. \\
\noindent\rule{16cm}{1pt}

\subsection{Discussion}

\subsubsection{The Independence Assumption on $\eta_i$}

Our key assumption is that the measurement error terms, $\eta_i$, are independent across $i$. This allows us to use the two measures, $\hat{X}_i$ and $\hat{X}_i^{instr}$. This assumption rests on independent variation across these measures, even conditional on $Z_i$. (In the notation of \eqref{Xihat}, this comes through independent variation in $\xi_i$).  This is a commonplace assumption (e.g., independent errors in experiments in \citealp{gillen}). It also seems more defensible when a researcher uses measures where the errors are likely to be independent (either due to randomness inherent to machine learning or LLM outputs, or through researcher inputs, e.g., typos or prompt errors), or when the researcher uses measures from different models (whose errors are likely to be different than those in the different subsamples). However, the assumption may fail in empirically relevant scenarios. This includes when (i) the researcher uses the same pre-trained model for both measures and they are not independent draws from the model (i.e., there are correlated errors violating Assumption 1(ii)), or when (ii) spillovers lead to correlation in the measures. We illustrate the effect of such violations on our results through some simulation designs in Section \ref{MC}.

\subsubsection{Efficiency Gains}

The benchmark Instrumental Variables estimator only uses half (or less) of the outcome data: those in the training sample. However, one could follow \cite{gillen} and simply swap the labels, obtain two IV estimators, and average them. As they discuss, this should be more efficient as it uses all available data, without requiring any further assumptions (Assumption 1 and 2 apply to all the data, not only to the observations in the training data). A similar idea can be implemented when many more measures (instruments) are available.

\section{Monte Carlo Simulations}\label{MC}

To illustrate the extent of the bias in the traditional approach (using the prediction-generated regressor ``as is" in a linear regression), and to showcase the finite-sample performance of our instrumental variable estimator, we conduct extensive Monte Carlo simulations. 
We keep the simulation design as close as possible to empirical contexts and to the empirical applications presented below. 

Overall, our simulation results highlight the extensive bias in OLS estimates. These simulations also show that our approach performs very well across parameter values and designs. In fact, our estimator is very close to true values with a much smaller sample size than is typically used in practice.

\subsection{Overview of the Simulation Design}

In these simulations, each observation corresponds to a document indexed by $i = 1,\dots, n$. Rather than assuming we observe a low-dimensional latent variable $X_{i}$ produced by a prediction model (e.g., an LLM), we construct a more realistic high-dimensional environment that mirrors how these methods process text (see \citealp{bosl:etal:2025} for a similar design). 
In particular, each document is represented by a document–term matrix (DTM) of word counts and the corresponding embedding. From these embeddings, we generate a latent label and its noisy prediction, while introducing this noise in a controlled fashion. 
This design allows us to examine how, e.g., commonly used modern text-as-data methodologies create structured errors that propagate into downstream regression analyses. 

\subsubsection{Simulation Set-Up}

As described above, for each document $i$, we simulate a high-dimensional vector of token frequencies. Let $Z^\top_i = (Z_{i1}, \dots, Z_{iK})$ be the vector of counts across $K$ vocabulary terms. Word counts are generated via a Poisson model with heterogeneous frequencies, where  $Z_{ik} \sim \text{Poisson}(\lambda_k)$ and $\lambda_k \sim \text{Gamma}(\theta, \psi)$, with $\theta = 1.2$ and $\psi = 0.3$.\footnote{The parameters are chosen so that the marginal distribution has a mean-to-variance ratio consistent with empirical word frequency distributions, i.e., sparse and counts concentrated in a few words \citep{pian:2014}. In Table~\ref{tab:robustness.sims} we show that our results are robust to different values of $\theta$ and $\psi$.} This produces sparse, heavy-tailed DTMs as commonly found in text corpora (\citealp{tadd:2015, inou:etal:2017}).

Next, we map each count vector $Z^\top_i$ to a $D$-dimensional embedding $E^\top_i$, mimicking how LLMs transform raw text into dense semantic vectors (\citealp{devl:etal:2019}). To that end, we define $E^\top_i = \tanh\big(Z^\top_i W \big)$, where $W$ is a $K \times D$ weight matrix, and each $W_{kd} \sim \mathcal N\left(0, \frac{1}{K}\right)$. Note that $\tanh(\cdot)$ implements a mild nonlinearity similar to activation functions used in neural networks (see e.g., \citealp{lecu:etal:1998}).\footnote{In Table~\ref{tab:robustness.sims} we show that our results are robust to common activation functions such as the Rectified Linear Unit (ReLU), with $E_{id} = \max (0, Z_{i} W_{\cdot d})$} This step induces shared document structure (through $Z^\top_i$), nonlinear transformations (through $\tanh(\cdot)$), and co-movement across embedding dimensions (through $W$).

We assume that documents possess an underlying true value $X_i$ which is a logistic transformation of a sparse linear combination of the embedding coordinates, i.e., $p_i = E^\top_i \delta$, with $X_i = \text{logit}^{-1}(p_i)$. In this setting, the $D$-dimensional vector $\delta$ is sparse, with $n_{\text{signal}}$ non-zero entries, which means that only a few coordinates carry information.

We generate noisy predicted probabilities as follows:
$$\widehat{X}_i = \text{logit}^{-1}(p_i + \eta_i), \qquad \tilde{\eta}_i \overset{\text{iid}}{\sim}  \mathcal{N}(0, \sigma_{\eta}^2).$$


We generate outcomes using the following equation:
$$Y_i = \beta_0 + \beta_1 X_i + \varepsilon_i, \qquad \varepsilon_i \overset{\text{iid}}{\sim}  \mathcal N(0,1),$$

with $\beta_0 = 0.5$ and $\beta_1 = -2$ as the parameters of interest. Because the researcher only observes $\widehat{X}_i$ rather than $X_i$, the OLS estimates from a regression of $Y$ on $\widehat{X}_i$ suffer from bias due to measurement error introduced in the embedding-driven prediction step.\footnote{We note that $\eta_i = \hat{X}_i - X_i$ differs from $\tilde{\eta}_i$ which is the error drawn within the nonlinear model. However, the violations are very mild, under 1\% in the Low Noise case, and the design and closeness to the assumptions validate the design.}

As described above, to correct for measurement error, we construct an instrument using documents that are close in the embedding space. 
To do so, we randomly split the sample into 1) a training set and 2) an instrument set. For each training observation $i$, we find its 
nearest neighbor in the instrument set using Euclidean distance ($L_2$ norm) in the embedding space ($E$):
 $$j(i) = \arg\min_j L_2(E_i, E_j ) $$

The key intuition is simple: documents close in embedding space share semantic structure, but prediction errors for different documents 
remain conditionally independent. As a result, nearest-neighbor predictions serve as valid instruments that are correlated with 
the true signal but uncorrelated with individual prediction noise. Once a match has been found, 
we use $\widehat{X}_i$ as the noisy regressor and $\widehat{X}_{nn(i)}$ as the instrument. 

Using this instrument, we estimate $\beta_1$ with the instrumental variables: 
$\widehat{\beta}_1^{\text{IV}}$. We then compare these results to the biased OLS estimator using noisy predictions and
the oracle estimator using unobserved true probabilities. 

Across simulations, we vary the number of documents $n \in \{5000, 20000, 50000 \}$. We set the number of
documents used to train the prediction model to $n_{\textrm{labeled}} = 1250$, 
the number of features $K$ is set to 2500 tokens, and the number of dimensions for each embedding is set to 100. Finally, we vary $\sigma_\eta \in \{0.25, 1, 2\}$.
In each specification, a neural network is trained to predict $X_i$ from the embeddings; 
each scenario is repeated 500 times.

\subsubsection{Simulation Results}

Figure~\ref{fig:fig01} presents the distribution of $\widehat{\beta}_1$ across 500 replications, faceted by sample size ($n = 5000, 20000, 50000$) 
and stratified by noise level, comparing OLS on $\widehat{X}_i$ to our nearest-neighbor IV estimator. 
Two patterns stand out. First, OLS estimates are persistently attenuated toward zero relative to the true value $\beta_1=-2$ (dashed line), 
and this attenuation barely moves with sample size: the point estimates in the $n = 5000$ and $n = 50000$ panels sit at roughly the same 
distance from the truth. This is consistent with a fixed asymptotic bias, i.e., increasing $n$ shrinks sampling variance, but it does nothing 
to correct the bias induced by (heteroscedastic) measurement error in $\widehat{X}_i$. Second, IV estimates are centered on the true value in every panel, 
and the percentile bands visibly tighten as $n$ grows, indicating that the instrument is consistent. The gap between 
the two estimators widens with the noise level: under Low noise the two intervals nearly overlap, whereas under High noise the OLS 
estimates collapse toward zero while IV remains anchored near the truth.

\begin{figure}[htbp]
  \centering
  \caption{Distribution of $\widehat{\beta}_1$ across 500 Monte Carlo simulations, by estimator (OLS on $\widehat{X}_i$ vs. IV with nearest-neighbor instrument)}
  \includegraphics[width=0.85\linewidth]{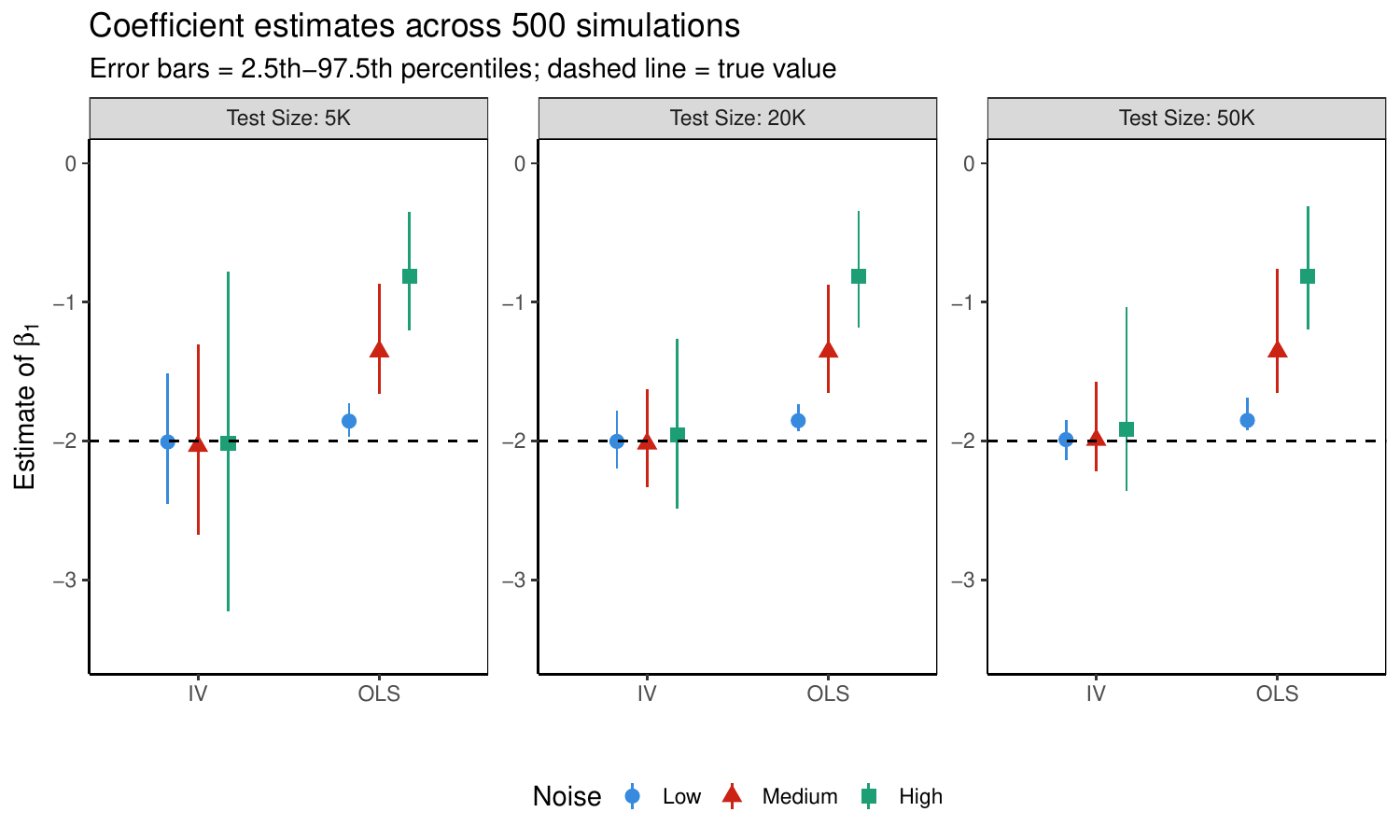} 
  \label{fig:fig01}
	\parbox{6.4in}{\footnotesize Notes: Sample size ($n=5000, 20000, 50000$) and noise level (Low: $\sigma_\eta=0.25$; Medium: $\sigma_\eta=1$; High: $\sigma_\eta=2$). Vertical bars span the 2.5th and 97.5th percentile of simulated estimates; the horizontal dashed line marks the true value ($\beta_1=-2$). OLS estimates are attenuated toward zero in every cell and do not converge to the truth as $n$ grows, reflecting a sample-size-invariant asymptotic bias; IV estimates are centered near the true value and their dispersion shrinks with $n$.}
\end{figure}

\begin{table}[!h]
\centering
\caption{Root mean squared error (RMSE) of $\widehat{\beta}_1$ across 500 simulations, by estimator, sample size, and noise level}
\centering
\begin{tabular}[t]{lcccccc}
\toprule
\multicolumn{1}{c}{ } & \multicolumn{2}{c}{5K} & \multicolumn{2}{c}{20K} & \multicolumn{2}{c}{50K} \\
\cmidrule(l{3pt}r{3pt}){2-3} \cmidrule(l{3pt}r{3pt}){4-5} \cmidrule(l{3pt}r{3pt}){6-7}
Noise & OLS & IV & OLS & IV & OLS & IV\\
\midrule
Low & 0.1766 & 0.2477 & 0.1683 & 0.1119 & 0.1730 & 0.0868\\
Medium & 0.6785 & 0.3511 & 0.6754 & 0.1732 & 0.6819 & 0.1621\\
High & 1.2097 & 0.6023 & 1.2097 & 0.3295 & 1.2094 & 0.3323\\
\bottomrule
\end{tabular}\label{tab:rmse.sims}

\medskip

	\parbox{6.4in}{\footnotesize Notes: All other parameters are kept at their baseline values ($K=2500$ vocabulary terms, $D=100$ embedding dimensions, $n_{\text{labeled}} = 1250$). OLS RMSE is approximately invariant to $n$ within each noise stratum, consistent with a persistent bias that sampling variance alone cannot eliminate. IV RMSE declines monotonically in $n$ and is lower than OLS RMSE in all cells except the smallest sample under Low noise, where finite-sample variance in instrument-based matching temporarily outweighs the (small) OLS bias.}
\end{table}

Table~\ref{tab:rmse.sims} examines these patterns using the root mean squared error (RMSE). Consistent with Figure~\ref{fig:fig01}, RMSE for OLS is roughly flat across sample sizes within a given noise stratum (0.177/0.168/0.173 under Low noise for $n = $5K/20K/50K). This supports our conclusion of a fixed asymptotic bias that dominates any reduction in variance. The RMSE of the IV estimate, by contrast, falls sharply with $n$ (0.248 $\rightarrow$ 0.112 $\rightarrow$ 0.087 under Low noise), and the improvement is steepest exactly where OLS bias is worst (Medium and High noise specifications). The one exception is the smallest sample under Low noise, where OLS RMSE is actually lower (0.177 vs. 0.248): at the smallest sample size, the OLS bias is small, while the finite-sample variance of the nearest-neighbor instrument (whose match quality depends on the size of the instrument pool) is still relatively larger. This crossover disappears as soon as $n$ or $\sigma_\eta$ increases, which is the relevant regime for most applications with text-based predicted variables.

Table~\ref{tab:robustness.sims} tests the simulation design along seven dimensions, holding $n=20000$ fixed. 
Two of these perturbations target the assumptions underlying the instrument's validity, and both erode its performance accordingly. 

First, we add noise directly into the embedding coordinates $E_i$, rather than only into the predicted probability, as in the baseline design. Formally, we let $\widehat{X}_i = \text{logit}^{-1}( (E_i  + U_i)^\top \delta ), U_{id} \sim \mathcal{N}(0, \sigma_{\eta}^2).$ As the table shows, the RMSE rises for both estimators relative to Table~\ref{tab:rmse.sims} (IV High: 0.79 vs. 0.33 at baseline). This is because the nearest-neighbor match is computed in this same embedding space: corrupting $E_i$ also corrupts the metric used to find valid neighbors, so the instrument is less able to locate semantically close documents and its correlation with the true signal weakens, even though the noise remains idiosyncratic across documents. 

Second, and more damagingly, the noise instead perturbs the coefficient vector $\delta$, which is a shock shared across all documents in a given replication rather than one that is document-specific. Formally, we have that  $\widehat{X}_i = \text{logit}^{-1} (E_i^\top (\delta + \upsilon) ), \upsilon_d \sim \mathcal{N}(0, \sigma_{\eta}^2).$ IV RMSE nearly converges to OLS RMSE at 
Medium and High noise (0.922 vs. 1.011, and 1.425 vs. 1.502). This follows directly from the identification argument: the instrument's validity relies on the measurement error being idiosyncratic across documents, so that a document $i$ and its match $nn(i)$ have independent errors. A common shock to $\delta$ breaks this — $i$ and $nn(i)$ now share the same perturbation, the exclusion restriction fails, and the instrument inherits much of the same bias as the noisy regressor it was meant to correct.

The remaining perturbations leave both estimators' RMSE essentially unchanged from the Table~\ref{tab:rmse.sims} baseline. Varying the Gamma shape and rate parameters that govern word-frequency sparsity ($\theta$ and $\psi$, rows 3--6) has little effect on either estimator, confirming that the instrument's performance does not depend on the particular sparsity pattern of the underlying document-term matrix. Replacing the $\tanh(\cdot)$ embedding nonlinearity with ReLU produces the same result: RMSE stays comparable to baseline for both OLS and IV. Together, these robustness checks suggest the instrument is robust to how the text-generating process and embedding nonlinearity are specified, and breaks down when the noise itself violates the idiosyncratic-error assumption that identification depends on.

\begin{table}[!h]
\centering
\caption{RMSE of $\widehat{\beta}_1$ ($n=20000$) under seven alternative data-generating specifications, holding all other parameters at baseline.}
\centering
\resizebox{\ifdim\width>\linewidth\linewidth\else\width\fi}{!}{
\begin{tabular}[t]{llcccccc}
\toprule
\multicolumn{2}{c}{ } &  & & \multicolumn{2}{c}{20K} &  \\
 \cmidrule(l{3pt}r{3pt}){5-6} 
Simulation & Noise & & & OLS & IV & & \\
\cmidrule{1-8}
 & Low &  &  & 0.3904 & 0.1363 &  & \\

 & Medium &  &  & 1.3957 & 0.3531 &  & \\

\multirow[t]{-3}{*}{\raggedright\arraybackslash 1. Embedding coordinate noise} & High &  &  & 1.7012 & 0.7906 &  & \\
\cmidrule{1-8}
 & Low &  &  & 0.2244 & 0.1766 &  & \\

 & Medium &  &  & 1.0114 & 0.9222 &  & \\

\multirow[t]{-3}{*}{\raggedright\arraybackslash 2. Correlated noise across documents} & High &  &  & 1.5022 & 1.4250 &  & \\
\cmidrule{1-8}
 & Low &  &  & 0.1860 & 0.1236 &  & \\

 & Medium &  &  & 0.6648 & 0.1833 &  & \\

\multirow[t]{-3}{*}{\raggedright\arraybackslash 3. $\theta = 1.2, \psi = 0.7$} & High &  &  & 1.1926 & 0.3390 &  & \\
\cmidrule{1-8}
 & Low &  &  & 0.1759 & 0.1157 &  & \\

 & Medium &  &  & 0.6905 & 0.1855 &  & \\

\multirow[t]{-3}{*}{\raggedright\arraybackslash 4. $\theta = 1.2, \psi = 0.05$} & High &  &  & 1.2145 & 0.3534 &  & \\
\cmidrule{1-8}
 & Low &  &  & 0.1686 & 0.1154 &  & \\

 & Medium &  &  & 0.6626 & 0.1839 &  & \\

\multirow[t]{-3}{*}{\raggedright\arraybackslash 5. $\theta = 2.5, \psi = 0.3$} & High &  &  & 1.1961 & 0.3068 &  & \\
\cmidrule{1-8}
 & Low &  &  & 0.1754 & 0.1228 &  & \\

 & Medium &  &  & 0.6752 & 0.1783 &  & \\

\multirow[t]{-3}{*}{\raggedright\arraybackslash 6. $\theta = 0.8, \psi = 0.3$} & High &  &  & 1.2099 & 0.3607 &  & \\
\cmidrule{1-8}
 & Low &  &  & 0.3028 & 0.1506 &  & \\

 & Medium &  &  & 0.5618 & 0.1927 &  & \\

\multirow[t]{-3}{*}{\raggedright\arraybackslash 7. ReLU} & High &  &  & 1.0202 & 0.3514 &  & \\
\bottomrule
\end{tabular}}\label{tab:robustness.sims}

\medskip

	\parbox{6.4in}{\footnotesize Notes: The table presents seven different perturbations to the baseline simulation design. RMSE is essentially unchanged from Table~\ref{tab:rmse.sims} with changes to the word-frequency distribution (Rows 3-6). However, RMSE rises for both estimators when instrument relevance is decreased, particularly when instrument exogeneity (coefficient-level noise in Row 2) is compromised, with IV RMSE approaching that of OLS.}
\end{table}

\section{Empirical Applications}

To further demonstrate the value of our proposed approach, we revisit two recently published studies, \cite{ash} and \cite{lin2025}. We show that the algorithm can be easily adapted to each setting, allowing researchers to draw valid inferences. We briefly describe each study, relating them to our set-up, before re-analyzing them with our estimator. 

\subsection{Re-Analysis of \cite{ash}}

To show how our results can help understand research questions in political science, and how they can be applied in empirical settings, we revisit the empirical application in \cite{ash}. The paper studies reactions (e.g., applause, cheerfulness) of members of the German state parliaments (MPs) in response to speeches by their colleagues. The hypothesis is that the reactions to speeches depend on the type of speech that is being given: whether the speech is more congruent with women's topics, or men's, and that those effects depend on the speaker's gender itself. This is motivated by social identity theory and role congruity theory (e.g., \citealp{eagly}), where a speech that is framed opposite to one's gender (e.g., a male speaker on topics that are traditionally associated with women) may face a more negative reaction due to ``failing to conform to gender norms". 

To investigate whether gendered speech leads to different reactions in parliament -- and whether that depends on the congruency of the speaker's gender with the topics in the speech-- the authors collect transcripts of over 544,000 speeches across all 16 German state parliaments between 1991 and 2020. They further collect information about the MPs themselves, including gender.  Within the transcripts, they observe both the speech per se and reactions to it: e.g., applause or interjections.\footnote{As they state in the paper, stenographers in these parliaments note reactions to those speeches in those transcripts. The authors consider the seven main categories of reactions as outcomes: applause (a measure of support), cheerfulness (positive types of laughter), laughter (malicious jeering), interjections (interruptions), disagreement (verbal protest against something), agreement, and unrest (a state of noisiness or disorder).} To create a measure of gendered speech, they use a Latent Dirichlet Allocation (LDA, see \citealp{blei}) topic model to classify speeches in terms of topics and gender. Then, they regress a measure of outcome (e.g., interruptions) on this generated regressor.

Thus, their procedure falls exactly within the scope of Section \ref{main}: $Z_i$ is speech $i$ (which may be given by a man or woman), and $X_i = f(Z_i)$ is the true value of the gendered speech. However, they only observe an estimated measure (after running the LDA) given by $\hat{X}_i$. They use $\hat{X}_i$ in regression (\ref{eq3}), where $Y_i$ is a measure of support for the speech (logged reaction counts, for different types of reactions). We now reproduce their main specifications and compare them with the IV estimator that we propose. 
In particular, we split the data into three equal parts. The first split is used to train the LDA model, reproducing the results from the original paper almost exactly. We then apply the parameters learned during the training stage to each of the two remaining splits independently. One of these splits serves as the master data, while the other serves as the instrument pool. To construct a set of matches, we use nearest-neighbor matching based on a singular value decomposition (SVD) using both splits and the more than 84,000 features in the document-term matrices.\footnote{This implementation simplifies the matching because we look for matches based on 200 dimensions, rather than 84,000 features. We opt for this as an illustration of the approach, rather than fully revisiting the original paper. However, the simplification comes with two important implicit assumptions. First, we are assuming that the input $Z_i$ documents are well approximated by the SVD and that they preserve closeness in the latent variable. Formally, let $M_i$ represent the SVD for $Z_i$. Thus, we are assuming that $\|f(Z_i) - f(Z_j)\| \leq L \|M_i - M_j\|$ for some $L>0$: closneess in SVD space leads to closeness in the latent space. Second, this transformation may increase the chance of violations of the independence assumption. Indeed, for a close match, it is more likely that the approximation errors are more similar. Thus, in an empirical exercise, the researcher must be mindful when making these simplifications.} We focus on the first 200 dimensions. This dimension-reduction step is used to speed up the matching process. The results are shown in Table \ref{tab:gender_interactions_combined} below.\footnote{Table~\ref{tab:gender_interactions_combined_50} in Appendix shows that our results remain unchanged if we match based on the first 50 or 100 dimensions, respectively.}

\begin{table}[!htbp]
\centering
\caption{Reactions to Gendered Speech in the German Parliament: Original vs. IV-Corrected Estimates.}
\label{tab:gender_interactions_combined}
{
\setlength{\tabcolsep}{5pt}
\renewcommand{\arraystretch}{1.15}

\begin{tabular}{lccccc@{\hspace{3pt}}c}
\toprule\toprule
& \multicolumn{6}{c}{\textit{Predictor}} \\ 
\cmidrule(lr){2-7}
& \multicolumn{2}{c}{Female}& \multicolumn{2}{c}{Predicted Gender}& \multicolumn{2}{c}{Female $\times$ Predicted Gender} \\
\cmidrule(lr){2-3}\cmidrule(lr){4-5}\cmidrule(l){6-7}
Outcome: & Original & IV & Original & IV & Original & IV \\
\midrule

Applause
& 0.066$^{***}$ & 0.078$^{***}$
& -0.100$^{***}$ & -0.138$^{***}$
& 0.088$^{***}$ & 0.130$^{***}$ \\
& (0.009) & (0.009)
& (0.005) & (0.008)
& (0.007) & (0.011) \\[1mm]

Interjection
& 0.034$^{***}$ & 0.056$^{***}$
& -0.162$^{***}$ & -0.216$^{***}$
& 0.062$^{***}$ & 0.102$^{***}$ \\
& (0.007) & (0.008)
& (0.005) & (0.008)
& (0.007) & (0.011) \\[1mm]

Laughter
& 0.0004 & 0.0003
& -0.015$^{***}$ & -0.033$^{***}$
& 0.006$^{***}$ & 0.008$^{***}$ \\
& (0.001) & (0.001)
& (0.001) & (0.002)
& (0.001) & (0.002) \\[1mm]

Cheerfulness
& -0.010$^{***}$ & -0.004$^{**}$
& -0.023$^{***}$ & -0.037$^{***}$
& 0.009$^{***}$ & 0.019$^{***}$ \\
& (0.002) & (0.002)
& (0.001) & (0.002)
& (0.001) & (0.002) \\[1mm]

Unrest
& 0.012$^{***}$ & 0.014$^{***}$
& -0.016$^{***}$ & -0.022$^{***}$
& 0.006$^{***}$ & 0.009$^{***}$ \\
& (0.002) & (0.002)
& (0.001) & (0.002)
& (0.001) & (0.003) \\[1mm]

Disagreement
& 0.001$^{***}$ & 0.002$^{***}$
& -0.006$^{***}$ & -0.008$^{***}$
& 0.002$^{***}$ & 0.004$^{***}$ \\
& (0.001) & (0.001)
& (0.0004) & (0.001)
& (0.001) & (0.001) \\[1mm]

Agreement
& 0.002 & 0.002
& 0.001 & 0.004$^{**}$
& 0.001 & 0.001 \\
& (0.002) & (0.002)
& (0.001) & (0.002)
& (0.002) & (0.003) \\[1mm]

Positive
& 0.061$^{***}$ & 0.065$^{***}$
& -0.107$^{***}$ & -0.123$^{***}$
& 0.093$^{***}$ & 0.149$^{***}$ \\
& (0.009) & (0.009)
& (0.005) & (0.008)
& (0.007) & (0.011) \\[1mm]

Negative
& 0.039$^{***}$ & 0.063$^{***}$
& -0.173$^{***}$ & -0.236$^{***}$
& 0.068$^{***}$ & 0.110$^{***}$ \\
& (0.007) & (0.008)
& (0.005) & (0.008)
& (0.007) & (0.011) \\[1mm]

Share Positive
& 0.007$^{**}$ & 0.004
& 0.025$^{***}$ & 0.036$^{***}$
& 0.001 & -0.002 \\
& (0.003) & (0.003)
& (0.001) & (0.002)
& (0.002) & (0.003) \\[1mm]

Any
& 0.021$^{***}$ & 0.018$^{***}$
& -0.066$^{***}$ & -0.067$^{***}$
& 0.062$^{***}$ & 0.089$^{***}$ \\
& (0.006) & (0.006)
& (0.003) & (0.005)
& (0.004) & (0.006) \\

\midrule
\bottomrule
\end{tabular}
}
	\parbox{6.4in}{\footnotesize Notes: This table compares OLS estimates from \cite{ash} to IV estimates from our proposed measurement-error-correction procedure, using the same specification and outcome variables. Matches between the analysis and instrument samples are constructed via nearest-neighbor matching on an SVD of the document-term matrix (84,000+ features, first 200 dimensions retained), and the matched value of \textit{Predicted Gender} instruments for the analysis-sample value in an IV regression of each reaction-type outcome (log reaction counts) on \textit{Female}, \textit{Predicted Gender}, and their interaction. Outcomes include applause, interjections, laughter, cheerfulness, unrest, disagreement, agreement, and aggregate positive/negative/any-reaction measures. Standard errors are in parentheses.$^{***} p<0.01$, $^{**} p<0.05$, $^{*} p<0.10$}
\end{table}

Table \ref{tab:gender_interactions_combined} shows that the pattern of the original results is preserved under our IV approach, but the magnitudes are often substantially larger. Across most outcomes, the coefficient on \textit{Predicted Gender} remains negative, indicating that speeches more closely associated with women's topics tend to elicit weaker reactions. The change in magnitudes is significant: from 38\% larger for applause to 48\% for interjection, while other magnitudes have increases close to 50\% (e.g., share positive). At the same time, the interaction between \textit{Female} and \textit{Predicted Gender} remains positive and generally increases under IV, implying that this negative relationship is weaker when the speaker is a woman. This is especially clear for outcomes such as applause, interjections, and the summary measures of positive and negative reactions, where the coefficients on the interaction grow by over 60\% for applause and interjections, and by over 50\% for positive and negative reactions.

 These differences are consistent with attenuation bias in the original estimates due to measurement error in the LDA-based regressor. Once we account for this error, the negative main effect of predicted gender often becomes more negative, while the positive interaction with speaker gender becomes larger. At the same time, this is not uniform across all outcomes: for example, the estimates for agreement remain small, and the interaction for \textit{Share Positive} is not statistically significant in either specification. Overall, these results suggest that correcting for measurement error does not mechanically inflate all coefficients, but instead sharpens the role-congruity pattern precisely where it is most substantively relevant.

\subsection{Re-Analysis of \cite{lin2025}}

In the article ``Addressing Risk by Doing Good: Business Response to Government Policy Initiative,'' \cite{lin2025} argues that, in authoritarian regimes like China, firms facing greater political risk are more likely to respond actively to government policy initiatives. The author studies this through the participation of firms in China’s Targeted Poverty Alleviation (TPA) campaign after 2015. The core idea is that firms ``address risk by doing good'': by spending on a politically important state initiative, they can build ties with officials and potentially reduce future regulatory or political vulnerability. 

On the measurement side, the paper builds a firm-level Political Risk Index (PRI) from an original text dataset of 418,480 investor–firm Q\&As collected from earnings calls, IPO roadshows, analyst meetings, and related investor relations events for Chinese-listed firms from 2012 to 2019. Thus, the paper measures political risk at the firm-year level by looking at how much firm-investor discussion is devoted to politics-related concerns. The argument is that, when investors repeatedly ask about government actions, policy uncertainty, regulation, or state relations, this reveals concerns about the firm's exposure to political risk.

The first step of the measurement exercise uses BERT to classify whether each Q\&A is political or nonpolitical.\footnote{A conversation is coded as political when governments, government branches, public policies, or politicians are explicitly mentioned.} Those annotations were then used to fine-tune a Chinese BERT model.\footnote{To train the classifier, the author randomly sampled about 4,000 Q\&As, had three trained coders classify them independently, and used majority agreement as the final label.} The model identified 54,021 political conversations, or 12.9\% of all Q\&As. The resulting PRI is defined very simply as the proportion of political Q\&As among all Q\&As for a firm in a given year. The second step asks: What kind of political risk are firms talking about? Lin uses a Keyword-Assisted Topic Model (keyATM).\footnote{KeyATM allows the researcher to specify seed topics in advance, which gives more interpretable categories than an unsupervised topic model.} Based on the most frequent keywords from the earlier structural topic model, Lin defines five substantive topic groups: government finance (subsidies and tax policies), government investment and development plans, financial regulation, environmental regulation, and international relations. Each political Q\&A is then assigned to the topic with the largest topic share. 
The author then constructs topic-specific PRIs using the same logic as the original index: the number of topic-specific political Q\&As divided by all Q\&As. We consider this to be the measure $\hat{X}_i$, which is continuous and allows for mean-zero and independent errors per Assumption 1.

Table 3 in \cite{lin2025} reports estimates of a linear regression of the effect of political risk (PRI) on poverty-alleviation expenditure (using a difference-in-differences design). The treatment setting is defined by the launch of the TPA campaign in 2015. Before then, firms did not face the same political opportunity to align themselves with the state through poverty-alleviation spending. The main variable of interest is the interaction between PRI and Post2015.

To implement our proposed approach, we randomly divide the data into two equal groups: an instrument group and an analysis (training) group. For each firm-year observation in the training group, we identify a suitable match among the observations in the instrument group using Euclidean distance computed from the topic-specific PRIs.\footnote{Unlike in the previous application and the simulations, we do not have access to the source DTMs. As a result, the matching procedure relies on a function of the textual data--the topic-specific PRIs--rather than on the textual data themselves. We do not expect this departure to create a major complication because documents with similar topic distributions under keyATM should also have similar underlying word-frequency profiles.} We then estimate the effect using IV, where the matched PRI serves as an instrument for the firm's PRI. 

The results in Table~\ref{tab:main_results_lin} show that the interaction between firm-level political risk and the post-2015 period is positive and statistically significant, supporting the claim that firms respond strategically when a policy initiative creates an opportunity for political alignment. The original estimates, however, are smaller in magnitude than the corresponding IV estimates, suggesting a bias due to measurement error in the main explanatory variable. Substantively, a 1-percentage-point increase in firm-level political risk is associated with about a 2.1\% increase (from a specification with year and firm fixed effects and firm-level controls) in poverty-alleviation expenditure after 2015, which corresponds to roughly \$10,600 on average (a figure 30\% larger than the original finding). 

\begin{table}[!htbp]\centering
\caption{Effect of Firm-Level Political Risk on Poverty-Alleviation Expenditure: Original vs. IV-Corrected Estimates.}
\label{tab:main_results_lin}
\begin{threeparttable}
\setlength{\tabcolsep}{7pt}
\renewcommand{\arraystretch}{1.15}
\begin{tabular}{lcccccc}
\toprule\toprule
& \multicolumn{6}{c}{\textit{Dependent variable: Expenditure}} \\
\cmidrule(lr){2-7}
& \multicolumn{2}{c}{Baseline} & \multicolumn{2}{c}{Fixed Effects} & \multicolumn{2}{c}{Controls} \\
\cmidrule(lr){2-3} \cmidrule(lr){4-5} \cmidrule(lr){6-7}
& Original & IV & Original & IV & Original & IV \\
\midrule
\textit{Variables} \\
   PRI                     & 0.0010        & 0.002$^{*}$   & -0.009$^{***}$ & -0.006        & -0.007$^{***}$ & -0.003\\   
                           & (0.0008)      & (0.001)       & (0.002)        & (0.006)       & (0.002)        & (0.006)\\   
   Post 2015               & 0.590$^{***}$ & 0.491$^{***}$ &                &               &                &   \\   
                           & (0.059)       & (0.103)       &                &               &                &   \\   
   PRI $\times$ Post 2015  & 0.019$^{***}$ & 0.025$^{***}$ & 0.019$^{***}$  & 0.026$^{***}$ & 0.016$^{***}$  & 0.021$^{***}$\\   
                           & (0.003)       & (0.006)       & (0.004)        & (0.007)       & (0.003)        & (0.007)\\ 
   \midrule
\textit{Fixed effects} \\
Firm 
  & No & No & Yes & Yes & Yes & Yes \\
Year 
  & No & No & Yes & Yes & Yes & Yes \\
  \midrule
Controls 
  & No & No & No & No & Yes & Yes \\
\midrule
\textit{Fit statistics} \\
Observations 
  & 9,466 & 4,652 & 9,466 & 4,652 & 9,371 & 4,612 \\
$R^2$ 
  & 0.082 & 0.073 & 0.555 & 0.633 & 0.542 & 0.630 \\
Within $R^2$ 
  &       &       & 0.010 & 0.003 & 0.013 & 0.009 \\
\bottomrule\bottomrule
\end{tabular}
\end{threeparttable}

\medskip

	\parbox{6.4in}{\footnotesize Notes: This table compares OLS difference-in-differences estimates from \cite{lin2025} to IV estimates from our proposed correction procedure, applied to the same firm-year panel of Chinese listed firms (2012--2019). To implement the IV correction, firm-year observations are randomly split into an analysis group and an instrument group, and each analysis-group observation is matched to an instrument-group observation via nearest-neighbor matching on Euclidean distance in topic-specific PRIs; the matched PRI instruments the firm's own PRI (and its interaction with `Post 2015') in IV. The outcome is poverty-alleviation expenditure. Clustered standard errors at the firm level are in parentheses. $^{***} p<0.01$, $^{**} p<0.05$, $^{*} p<0.10$.}

\end{table}

\FloatBarrier

\section{Discussion}

Prediction-generated measures are becoming increasingly popular in empirical research in political science and the social sciences. Researchers routinely use large language models, topic models, classifiers, and other machine-learning methods to measure concepts that are difficult to observe directly, including sentiment, political risk, media slant, policy positions, and gendered language. These measures are often subsequently used as explanatory variables in regression models, even though they are estimated with error. This paper shows that treating such variables as if they were observed without error can produce substantial bias in downstream estimates. To address this problem, we develop a simple instrumental-variables approach based on sample splitting. By generating multiple measures of the same underlying concept from independent subsamples and using one measure as an instrument for another, researchers can isolate variation associated with the latent construct from variation caused by prediction error. The procedure is computationally simple and does not require researchers to collect new data, manually label an additional validation sample, or impose a fully parametric model for the measurement-error process.

We establish the conditions under which the split-sample instrument is both relevant and exogenous and provide a practical algorithm for implementing the estimator when identical observations are not available across the independent samples. In particular, we use nearest-neighbor matching to identify observations that are similar in the feature or embedding space and then use the prediction associated with the matched observation as an instrument for the prediction in the analysis sample. Our simulations demonstrate the consequences of ignoring prediction error and the potential gains from the proposed correction. Across a range of sample sizes, noise levels, document-term distributions, and embedding specifications, OLS estimates based on noisy generated regressors remain biased even as the sample size increases. In contrast, the IV estimates are generally centered near the true parameter and become more precise as the sample grows. The simulations also clarify the limits of the method. Its performance deteriorates when matching becomes less informative or when prediction errors are correlated across observations, as occurs when all predictions share a common perturbation. These results emphasize that independent sample splitting is not merely a computational device: it is central to the identification strategy because it helps separate the measurement errors contained in the alternative measures.

The two empirical reanalyses illustrate that these biases can meaningfully affect substantive conclusions. Together, our results show that prediction error need not merely add noise to an analysis; it can systematically distort the magnitude of relationships that researchers seek to estimate. The approach developed here provides a practical correction when independent measures can be constructed, although its validity depends on the independence of their measurement errors and the strength of the relationship between matched observations. Future work could examine alternative matching and aggregation procedures, the use of more than two sample splits or multiple prediction models, and extensions to nonlinear outcome models and more complex forms of dependence. More broadly, the paper highlights that researchers should treat prediction-generated regressors as estimated quantities rather than observed covariates and should incorporate the uncertainty created during measurement into downstream statistical inference.

\bibliographystyle{plainnat}
\bibliography{gptbib}

\newpage
\appendix

\Large
\begin{center}
\textsc{Online Appendix to ``When Predictions Become Regressors: A Split-Sample Correction for Biases in Downstream Inference"}
\end{center}

\normalsize
\section{Proofs}\label{proofs}

\begin{lemma}\label{lemma1}
Under Assumption 1, if $\hat{X}_i$ is used instead of $X_i$ in the regression (\ref{eq1}), then
the OLS bias need not take the classical attenuation bias form.
\end{lemma}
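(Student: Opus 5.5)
The plan is to compute the probability limit of the OLS slope from regressing $Y_i$ on $\hat{X}_i$ in the feasible model (\ref{eq3}), and then exhibit a data-generating process satisfying Assumption 1 for which this limit differs from the classical form $\beta_1\sigma^2_X/(\sigma^2_X+\sigma^2_\eta)$. Because the statement is existential ("need not"), one explicit example is enough once the general formula is in hand.

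First, by the i.i.d. structure in Assumption 1(i), bounded second moments and $Var(X_i)>0$, a law of large numbers gives
\[
\hat{\beta}_1^{OLS} \to_p \frac{Cov(\hat{X}_i, Y_i)}{Var(\hat{X}_i)} = \beta_1 + \frac{Cov(X_i+\eta_i,\ \varepsilon_i - \beta_1\eta_i)}{Var(X_i+\eta_i)}.
\]
Next I would simplify the covariances using Assumption 1(ii). Since $X_i=f(Z_i)$ and $\mathbb{E}[\eta_i\mid Z_i]=0$, iterated expectations give $Cov(X_i,\eta_i)=0$. Likewise $\mathbb{E}[\varepsilon_i\mid X_i]=0$ gives $Cov(X_i,\varepsilon_i)=0$. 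The limit therefore becomes
\[
\beta_1\,\frac{\sigma^2_X}{\sigma^2_X+\sigma^2_\eta} + \frac{Cov(\eta_i,\varepsilon_i)}{\sigma^2_X+\sigma^2_\eta},
\]
where $\sigma^2_\eta = \mathbb{E}[Var(\eta_i\mid Z_i)]$ is allowed to come from heteroskedastic $\eta_i$.

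The key observation is that Assumption 1 does not impose $Cov(\eta_i,\varepsilon_i)=0$; the text explicitly allows within-unit dependence. To finish, I would construct a concrete case, for example $\varepsilon_i = \rho\,\eta_i + e_i$ with $e_i$ independent of $(Z_i,\eta_i)$ and $\eta_i$ independent of $Z_i$ with mean zero. This satisfies $\mathbb{E}[\varepsilon_i\mid X_i]=0$ and Assumption 1. The probability limit is then $(\beta_1\sigma^2_X+\rho\sigma^2_\eta)/(\sigma^2_X+\sigma^2_\eta)$. Choosing $\rho$ with $\rho/\beta_1>1$ makes the limit exceed $\beta_1$ in magnitude (amplification), and choosing $\rho$ with $\rho/\beta_1<0$ large enough reverses the sign. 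In either case the bias is not of the attenuation form. Heteroskedasticity alone, with $\sigma^2_\eta$ replaced by an average conditional variance, keeps the attenuation shape but changes the factor, which supports the same conclusion.

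The main obstacle is checking that the example genuinely satisfies every condition. In particular, $\mathbb{E}[\varepsilon_i\mid X_i]=0$ must hold while $\varepsilon_i$ depends on $\eta_i$; independence of $\eta_i$ from $Z_i$ together with mean zero handles this. I would verify that condition first, before computing the limit.
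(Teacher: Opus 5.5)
Your proof is correct and follows the paper's own argument: the same probability-limit decomposition, with the departure from attenuation coming entirely from the term $Cov(\eta_i,\varepsilon_i)$, which Assumption 1 leaves unrestricted. The paper's sign-reversal condition for $\beta_1>0$, namely $Cov(\eta_i,\varepsilon_i)<-\beta_1 Var(X_i)$, is exactly your $\rho\sigma^2_\eta<-\beta_1\sigma^2_X$, and your explicit construction $\varepsilon_i=\rho\eta_i+e_i$ (with $e_i$ mean zero, which you should state) only makes the existence claim concrete. Drop the heteroskedasticity aside, however: when $Cov(\eta_i,\varepsilon_i)=0$ the limit is exactly the classical $\beta_1\sigma^2_X/(\sigma^2_X+\sigma^2_\eta)$ with $\sigma^2_\eta=Var(\eta_i)$, so heteroskedasticity alone does not deliver the lemma.
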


\begin{proof}[Proof of Lemma \ref{lemma1}]

Consider the OLS estimator for $\beta_1$ in (\ref{eq3}), $\hat{\beta}_1 = \frac{\sum_{i=1}^n(\hat{X}_i - \overline{\hat{X}}) Y_i}{\sum_{i=1}^n (\hat{X}_i - \bar{\hat{X}})^2}$. This can be re-written as:
\begin{eqnarray*}
\hat{\beta}_1 &=& \frac{\sum_{i=1}^n(\hat{X}_i - \overline{\hat{X}_i})(\beta_0 + \beta_1 \hat{X}_i + u_i)}{\sum_{i=1}^n \left(\hat{X}_i - \overline{\hat{X}_i}\right)^2} = 0 + \beta_1 + \frac{\sum_{i=1}^n(\hat{X}_i - \overline{\hat{X}_i}) (\varepsilon_i-\beta_1 \eta_i)}{\sum_{i=1}^n \left(\hat{X}_i - \overline{\hat{X}_i}\right)^2}.
\end{eqnarray*}
Taking the probability limits on both sides as $n \to \infty$:
\begin{eqnarray}
plim_{n \to \infty} \hat{\beta}_1 &=& \beta_1 + \frac{Cov(\hat{X}_i, \varepsilon_i)}{Var(\hat{X}_i)} - \beta_1\frac{Cov(\hat{X}_i, \eta_i)}{Var(\hat{X}_i)}.\label{eqapp}
\end{eqnarray}
As $\eta_i$ is correlated with $\hat{X}_i$ by (\ref{eq2}), the last term is non-zero. Now, $Cov(\hat{X}_i, \eta_i) = Cov(X_i, \eta_i) + Var(\eta_i) = Var(\eta_i)$ by equation (\ref{eq2}) and by using $\mathbb{E}[\eta_i \mid X_i] = 0$. 
However, $Cov(\hat{X}_i, \varepsilon_i) = Cov(X_i, \varepsilon_i) + Cov(\eta_i, \varepsilon_i) = 0 + Cov(\eta_i, \varepsilon_i)$ need not equal 0, as we have made no assumptions on the latter term. In particular, the latter term can be negative if $\beta_1>0$ and $Cov(\eta_i, \varepsilon_i)<-\beta_1 Var(X_i)$.
\end{proof}

\begin{proof}[Proof of Proposition \ref{proposition}]
Recall that the endogeneity comes from $Cov(\hat{X}_i, u_i) \neq 0$. Now, consider that $u_i$ comes from the training sample, and $\hat{X}_i^{instr}$ from the instrument one. Since $Z_{nn(i)} = Z_i$ implies $f(Z_{nn(i)}) = X_i$, under the (perfect) match, we have that $\hat{X}_i^{instr} = X_i + \eta_{nn(i)}$ (i.e., it also measures $X_i$, but with a different measurement error). By random splits being disjoint, $nn(i) \neq i$. Then,
\begin{eqnarray*}
Cov(\hat{X}_i^{instr}, u_i)&=&Cov(\hat{X}_i^{instr}, \varepsilon_i - \beta_1\eta_i) \\
&=& Cov(X_i + \eta_{nn(i)}, \varepsilon_i) - \beta_1 Cov(X_i + \eta_{nn(i)},\eta_i)\\
&=&  Cov(X_i, \varepsilon_i) + Cov(\eta_{nn(i)}, \varepsilon_i) - \beta_1 Cov(X_i, \eta_i) - \beta_1 Cov(\eta_{nn(i)}, \eta_i)\\
&=& 0,
\end{eqnarray*}
where the final result comes from each of the four terms being equal to 0. Indeed, we use that $Cov(X_i, \varepsilon_i) = 0$ by exogeneity of $\varepsilon_i$ relative to $X_i$ (first term), $Cov(X_i, \eta_i) = 0$ as $\mathbb{E}[\eta_i \mid X_i] = 0$ (since $X_i = f(Z_i)$, and $f(\cdot)$ is $Z_i$-measurable, for the third term). 

As for the second and fourth terms, we have to account for the randomness in the choice of $nn(i)$. Let $U$ be the randomization device for Assumption 2. Note that, we can write the index as $\sum_{j \in instr} 1_{nn(i)=j} \eta_j$ so that $\mathbb{E}[\eta_{nn(i)} \mid Z_1,...,Z_n, U] =\sum_{j \in instr} 1_{nn(i)=j} \mathbb{E}[\eta_j \mid Z_j] = 0$ where the first equality uses Assumption 2, and the last equality uses Assumption 1. Finally, the second term equals 0 as $\mathbb{E}\eta_{nn(i)} \varepsilon_i = \mathbb{E}\left(\mathbb{E} [\eta_{nn(i)} \mid Z_1,...,Z_n, U]\mathbb{E}[\varepsilon_i \mid Z_1,...,Z_n, U]\right) = 0$ where the factorization follows from conditional independence of $\eta_{nn(i)}, \varepsilon_i$ with $nn(i) \neq i$. The fourth term follows analogously by replacing $\varepsilon_i$ by $\eta_i$. Thus, the instrument is exogenous.

As for relevance, note that $\hat{X}_i$ and $\hat{X}_i^{instr}$ are correlated as long as they measure the same underlying concept. Indeed, for $nn(i) \neq i$, but with $Z_{nn(i)} = Z_i$:
\begin{eqnarray*}
Cov(\hat{X}_i, \hat{X}_i^{instr})&=& Cov(X_i + \eta_i, X_i + \eta_{nn(i)}) \\
&=& Cov(X_i, X_i) + Cov(\eta_i, X_i) + Cov(X_i, \eta_{nn(i)}) + Cov(\eta_i, \eta_{nn(i)}) \\
&=&Var(X_i)>0,
\end{eqnarray*}
since $Cov(X_i,X_i) = Var(X_i)>0$, the other terms are 0 (by Assumption 1 and by the same arguments in the proof of exogeneity), and where we use that both $\hat{X}_i$ and $\hat{X}_i^{instr}$ are prediction-generated regressors of $X_i$ from equation (\ref{eq2}). 
\end{proof}

\begin{proposition}\label{prop_imperfect}
Let $(\hat{X}_i, \hat{X}_i^{instr})$ be two separate prediction-generated regressors computed from independent splits of the data, with $Z_i$ not necessarily equal to $Z_{nn(i)}$. 

Then, under Assumptions 1-2 and $\mathbb{E}[\varepsilon_i \mid Z_i] = 0$, $Cov(\hat{X}_i, \hat{X}_i^{instr}) \neq 0$ and $Cov(u_i, \hat{X}_i^{instr}) = 0$, even under imperfect matching.
\end{proposition}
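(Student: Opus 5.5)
The plan is to mirror the proof of Proposition \ref{proposition}, but to write the instrument with a separate latent term. Under imperfect matching we have $\hat{X}_i^{instr} = X_{nn(i)} + \eta_{nn(i)}$, with $X_{nn(i)} = f(Z_{nn(i)})$ and $X_{nn(i)} \neq X_i$ in general. The new element is the term $X_{nn(i)}$, which is a random function of the whole input vector. I will write it as $\sum_{j \in instr} 1_{nn(i)=j} X_j$. By Assumption 2, $X_{nn(i)}$ is measurable with respect to $\mathcal{G} := \sigma(Z_1,\dots,Z_n,U)$. Throughout I condition on $\mathcal{G}$ and then apply the law of iterated expectations.

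\textbf{Exogeneity.} Expand
\[
Cov(\hat{X}_i^{instr}, u_i) = Cov(X_{nn(i)}, \varepsilon_i) + Cov(\eta_{nn(i)}, \varepsilon_i) - \beta_1 Cov(X_{nn(i)}, \eta_i) - \beta_1 Cov(\eta_{nn(i)}, \eta_i).
\]
The second and fourth terms vanish exactly as in the proof of Proposition \ref{proposition}. The argument there used only the indicator representation, Assumption 2, the conditional independence across $nn(i)\neq i$ from Assumption 1(i), and $\mathbb{E}[\eta_j\mid Z_j]=0$. None of these required $Z_{nn(i)}=Z_i$.

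For the first and third terms, the key step is
\[
\mathbb{E}[\varepsilon_i \mid \mathcal{G}] = \mathbb{E}[\varepsilon_i \mid Z_i] = 0 ,
\qquad
\mathbb{E}[\eta_i \mid \mathcal{G}] = \mathbb{E}[\eta_i \mid Z_i] = 0 .
\]
The reduction to conditioning on $Z_i$ alone holds because $(Z_i,\eta_i,\varepsilon_i)$ is independent of $\{Z_j\}_{j\neq i}$ by Assumption 1(i), and of $U$ by Assumption 2. Since $X_{nn(i)}$ is $\mathcal{G}$-measurable, it follows that
\[
\mathbb{E}[X_{nn(i)}\varepsilon_i] = \mathbb{E}\big[X_{nn(i)}\,\mathbb{E}[\varepsilon_i\mid\mathcal{G}]\big] = 0 ,
\]
and the same holds with $\eta_i$ in place of $\varepsilon_i$. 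Together with $\mathbb{E}\varepsilon_i = \mathbb{E}\eta_i = 0$, this gives zero covariances, so all four terms are zero.

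This step is exactly where the strengthened condition $\mathbb{E}[\varepsilon_i\mid Z_i]=0$ is needed, in place of $\mathbb{E}[\varepsilon_i\mid X_i]=0$. The matched unit's latent value depends on $Z_i$ through the matching rule, and not merely through $X_i$. The main care point is verifying this conditional-independence reduction. In particular, I must show that $U$ and the other units' $Z_j$ carry no information about $\varepsilon_i$ or $\eta_i$ beyond $Z_i$. I will state the independence of $U$ from $\{Z,\eta,\varepsilon\}$ jointly, as Assumption 2 provides.

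\textbf{Relevance.} Expand
\[
Cov(\hat{X}_i, \hat{X}_i^{instr}) = Cov(X_i, X_{nn(i)}) + Cov(X_i,\eta_{nn(i)}) + Cov(\eta_i, X_{nn(i)}) + Cov(\eta_i,\eta_{nn(i)}).
\]
The last three terms are zero by the conditioning arguments above. The $\eta_i$ term vanishes because $X_{nn(i)}$ is $\mathcal{G}$-measurable, and the other two follow from $\mathbb{E}[\eta_{nn(i)}\mid\mathcal{G}]=0$. This leaves $Cov(X_i, X_{nn(i)})$. The conclusion $\neq 0$ therefore holds precisely when the match carries some information about $X_i$.

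This is the genuine obstacle, because nothing in Assumptions 1--2 alone forces it. I would state it as the relevance condition $Cov(X_i,X_{nn(i)})\neq 0$, consistent with the discussion after Proposition \ref{proposition}. I would then justify it for nearest-neighbor matching as follows. If $f$ is continuous and the matching distance $\|Z_{nn(i)}-Z_i\|$ is small, then $X_{nn(i)}$ is close to $X_i$, so the covariance approaches $Var(X_i)>0$. The perfect-match case then recovers Proposition \ref{proposition}.
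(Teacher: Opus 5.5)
Your proposal is correct and follows essentially the same route as the paper. It uses the same four-term expansion and the same conditioning on $\sigma(Z_1,\dots,Z_n,U)$: there $X_{nn(i)}$ is measurable, while $\mathbb{E}[\varepsilon_i\mid Z_i]=0$ and $\mathbb{E}[\eta_i\mid Z_i]=0$ remove the two new terms, and the remaining two terms are handled as in Proposition~\ref{proposition}. Your relevance step also lands where the paper's does: the paper writes the surviving term as $Var(X_i)+Cov(X_i,\Delta_i)$ with $\Delta_i=f(Z_{nn(i)})-f(Z_i)$ and states relevance only as an iff condition, which is exactly your $Cov(X_i,X_{nn(i)})\neq 0$. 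You are right that this condition is not implied by Assumptions 1--2 alone.
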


\begin{proof}[Proof of Proposition \ref{prop_imperfect}]
Following the proof of Proposition \ref{proposition}, we have that:
\begin{eqnarray*}
Cov(\hat{X}_i^{instr}, u_i)&=&Cov(\hat{X}_i^{instr}, \varepsilon_i - \beta_1\eta_i) \\
&=& Cov(X_{nn(i)} + \eta_{nn(i)}, \varepsilon_i) - \beta_1 Cov(X_{nn(i)} + \eta_{nn(i)},\eta_i).
\end{eqnarray*}
The same steps of the proof of Proposition \ref{proposition} imply that the second and fourth terms are equal to 0. The first term is 0 because $\mathbb{E}[\varepsilon_i \mid Z_1,...,Z_n, U] = \mathbb{E}[\varepsilon_i \mid Z_i] = 0$ from the strengthened assumption and Assumption 1, and $f(Z_{nn(i)})$ is measurable. The third term is equal to zero since $\mathbb{E}[X_{nn(i)} \eta_i \mid Z_1,...,Z_n, U] = f(Z_{nn(i)}) \mathbb{E}[\eta_i \mid Z_i] = 0$. 

As for relevance,
\begin{eqnarray*}
Cov(\hat{X}_i, \hat{X}_i^{instr}) &=& Cov(X_i+\eta_i, X_{nn(i)} + \eta_{nn(i)}) \\
&=& Cov(X_i, X_{nn(i)}) \\
&=& Cov(X_i, X_i + \Delta_i) \\
&=& Var(X_i) + Cov(X_i, \Delta_i).
\end{eqnarray*}
where the first line follows the proof of exogeneity above, and we set $\Delta_i \equiv f(Z_{nn(i)})-f(Z_i)$. Thus, if and only if $-Cov(X_i,\Delta_i) \neq Var(X_i) $, we have relevance. 
\end{proof}

\section{Additional Table Referenced in the Text}

\begin{table}[!htbp]
\centering
\caption{\small Reactions to Gendered Speech in the German Parliament:
Original vs.\ IV-Corrected Estimates. This table compares OLS estimates
from \cite{ash} with IV estimates from our proposed measurement-error
correction procedure. Matches are constructed using nearest-neighbor
matching on 50- and 100-dimensional SVD representations of the
document-term matrix. Standard errors are in parentheses.
$^{***}p<0.01$, $^{**}p<0.05$, $^{*}p<0.10$.}
\label{tab:gender_interactions_combined_50}

\begingroup
\scriptsize
\setlength{\tabcolsep}{2.5pt}
\renewcommand{\arraystretch}{0.95}

\begin{tabular}{lccccccccc}
\toprule\toprule
& \multicolumn{9}{c}{\textit{Predictor}} \\
\cmidrule(lr){2-10}
& \multicolumn{3}{c}{Female}
& \multicolumn{3}{c}{Predicted Gender}
& \multicolumn{3}{c}{Female $\times$ Predicted Gender} \\
\cmidrule(lr){2-4}
\cmidrule(lr){5-7}
\cmidrule(l){8-10}
Outcome
&  & \multicolumn{2}{c}{IV} 
&  & \multicolumn{2}{c}{IV}
& & \multicolumn{2}{c}{IV} \\ \cmidrule(lr){3-4} \cmidrule(lr){6-7} \cmidrule(lr){9-10}
& Original & 50 dim. & 100 dim.
& Original & 50 dim. & 100 dim.
& Original & 50 dim. & 100 dim. \\
\midrule

Applause
& 0.066$^{***}$ & 0.083$^{***}$ & 0.078$^{***}$
& -0.100$^{***}$ & -0.149$^{***}$ & -0.141$^{***}$
& 0.088$^{***}$ & 0.136$^{***}$ & 0.136$^{***}$ \\
& (0.009) & (0.009) & (0.009)
& (0.005) & (0.008) & (0.008)
& (0.007) & (0.011) & (0.011) \\[0.5mm]

Interjection
& 0.034$^{***}$ & 0.062$^{***}$ & 0.058$^{***}$
& -0.162$^{***}$ & -0.230$^{***}$ & -0.222$^{***}$
& 0.062$^{***}$ & 0.110$^{***}$ & 0.116$^{***}$ \\
& (0.007) & (0.008) & (0.008)
& (0.005) & (0.008) & (0.008)
& (0.007) & (0.011) & (0.011) \\[0.5mm]

Laughter
& 0.0004 & 0.0003 & 0.0003
& -0.015$^{***}$ & -0.016$^{***}$ & -0.016$^{***}$
& 0.006$^{***}$ & 0.005$^{***}$ & 0.005$^{**}$ \\
& (0.001) & (0.001) & (0.001)
& (0.001) & (0.002) & (0.002)
& (0.001) & (0.002) & (0.002) \\[0.5mm]

Cheerfulness
& -0.010$^{***}$ & -0.001 & -0.003
& -0.023$^{***}$ & -0.037$^{***}$ & -0.034$^{***}$
& 0.009$^{***}$ & 0.018$^{***}$ & 0.015$^{***}$ \\
& (0.002) & (0.002) & (0.002)
& (0.001) & (0.002) & (0.002)
& (0.001) & (0.003) & (0.003) \\[0.5mm]

Unrest
& 0.012$^{***}$ & 0.017$^{***}$ & 0.016$^{***}$
& -0.016$^{***}$ & -0.024$^{***}$ & -0.020$^{***}$
& 0.006$^{***}$ & 0.009$^{***}$ & 0.010$^{***}$ \\
& (0.002) & (0.002) & (0.002)
& (0.001) & (0.002) & (0.002)
& (0.001) & (0.003) & (0.003) \\[0.5mm]

Disagreement
& 0.001$^{***}$ & 0.002$^{***}$ & 0.002$^{***}$
& -0.006$^{***}$ & -0.007$^{***}$ & -0.008$^{***}$
& 0.002$^{***}$ & 0.002$^{*}$ & 0.005$^{***}$ \\
& (0.001) & (0.001) & (0.001)
& (0.0004) & (0.001) & (0.001)
& (0.001) & (0.001) & (0.001) \\[0.5mm]

Agreement
& 0.002 & -0.001 & -0.001
& 0.001 & 0.003 & 0.003
& 0.001 & -0.0001 & 0.0002 \\
& (0.002) & (0.002) & (0.002)
& (0.001) & (0.002) & (0.002)
& (0.002) & (0.003) & (0.003) \\[0.5mm]

Positive
& 0.061$^{***}$ & 0.079$^{***}$ & 0.075$^{***}$
& -0.107$^{***}$ & -0.161$^{***}$ & -0.152$^{***}$
& 0.093$^{***}$ & 0.145$^{***}$ & 0.144$^{***}$ \\
& (0.009) & (0.010) & (0.010)
& (0.005) & (0.008) & (0.008)
& (0.007) & (0.011) & (0.011) \\[0.5mm]

Negative
& 0.039$^{***}$ & 0.069$^{***}$ & 0.064$^{***}$
& -0.173$^{***}$ & -0.243$^{***}$ & -0.234$^{***}$
& 0.068$^{***}$ & 0.116$^{***}$ & 0.124$^{***}$ \\
& (0.007) & (0.008) & (0.008)
& (0.005) & (0.008) & (0.008)
& (0.007) & (0.011) & (0.012) \\[0.5mm]

Share Positive
& 0.007$^{**}$ & 0.005 & 0.007$^{*}$
& 0.025$^{***}$ & 0.027$^{***}$ & 0.025$^{***}$
& 0.001 & 0.002 & 0.0002 \\
& (0.003) & (0.004) & (0.004)
& (0.001) & (0.002) & (0.002)
& (0.002) & (0.003) & (0.003) \\[0.5mm]

Any
& 0.021$^{***}$ & 0.041$^{***}$ & 0.035$^{***}$
& -0.066$^{***}$ & -0.114$^{***}$ & -0.102$^{***}$
& 0.062$^{***}$ & 0.096$^{***}$ & 0.091$^{***}$ \\
& (0.006) & (0.006) & (0.006)
& (0.003) & (0.005) & (0.005)
& (0.004) & (0.007) & (0.007) \\

\bottomrule
\end{tabular}
\endgroup
\end{table}

\end{document}